\newtheorem{assumption}{Assumption}
\newtheorem{theorem}{Theorem}
\newtheorem*{example*}{Example}
\newtheorem{proposition}{Proposition}
\def\ben{\begin{eqnarray*}}
\def\een{\end{eqnarray*}}
\def  \A{\mathbb {A}}
\def  \R{\mathbb {R}}
\def  \E{\mathbb {E}}
\def \cC{\mathcal{C}}
\def\GN{ G }
\def\HN{ H }
\def\Hmin{ H_{\min} }
\def\Hhalf{ H_{1/2} }
\def\LambdaN{ \Lambda^N }
\def\TURN{ \gamma }
\def\IN{ I^N }
\def\Noise{ N }
\def\IU{ I^U }
\def\IUN{ I^{U/N} }
\def\xstar{ x^* }
\def\success{ s }
\def\error{ \epsilon }
\def\errorAML{ \epsilon^\text{AB} }
\def\ystar{ y^* }
\def\INUD{ I^{N,U,D} }
\def\GML{ D }
\def\IMLE{I^\text{GRAND}}
\def\GAML{ D_\text{AB} }
\def\pabandon{ p_\text{abandon} }
\def\pblock{ p_\text{block} }
\begin{document}


\title{Capacity-achieving Guessing Random Additive Noise Decoding (GRAND)}

\author{
\IEEEauthorblockN{Ken R. Duffy\IEEEauthorrefmark{1}, Jiange Li\IEEEauthorrefmark{2} and Muriel M\'edard\IEEEauthorrefmark{2}\\}
\IEEEauthorblockA{\IEEEauthorrefmark{1}Hamilton Institute, Maynooth University, Ireland. E-mail: ken.duffy@mu.ie.\\}
\IEEEauthorblockA{\IEEEauthorrefmark{2}Research Laboratory of Electronics, Massachusetts Institute of Technology, Cambridge, MA 02139, U. S. A.
E-mail: lijiange@mit.edu, medard@mit.edu.}
}

\maketitle


\begin{abstract}
We introduce a new algorithm for realizing Maximum Likelihood (ML)
decoding in discrete channels with or without memory. In it, the
receiver rank orders noise sequences from most likely to least
likely.  Subtracting noise from the received signal in that order,
the first instance that results in a member of the code-book is the
ML decoding. We name this algorithm GRAND for Guessing Random
Additive Noise Decoding.

We establish that GRAND is capacity-achieving when used with random
code-books. For rates below capacity we identify error exponents,
and for rates beyond capacity we identify success exponents. We
determine the scheme's complexity in terms of the number of
computations the receiver performs. For rates beyond capacity, this
reveals thresholds for the number of guesses by which if a member
of the code-book is identified it is likely to be the transmitted
code-word.

We introduce an approximate ML decoding scheme where the receiver
abandons the search after a fixed number of queries,  an approach
we dub GRANDAB, for GRAND with ABandonment. While not an ML decoder,
we establish that the algorithm GRANDAB is also capacity-achieving
for an appropriate choice of abandonment threshold, and characterize
its complexity, error and success exponents. Worked examples are
presented for Markovian noise that indicate these decoding schemes
substantially out-perform the brute force decoding approach.

\end{abstract}

\begin{IEEEkeywords}
Discrete channels; Maximum likelihood decoding; Approximate ML decoding; Error probability;
Channel coding.
\end{IEEEkeywords}


\section{Introduction}

\let\thefootnote\relax\footnote{\small\it 
These results were presented in part at ITA 2018, and in part at the 2018
International Symposium on Information Theory, Colorado, USA.}

Consider a discrete channel with inputs, $X^n$, and outputs, $Y^n$, consisting
of blocks of $n$ symbols from a finite alphabet $\A$ of size
$|\A|$. Assume that channel input is altered by random, not necessarily
memoryless, noise, $\Noise^n$, that is independent of the channel
input and also takes values in $\A^n$. Assume that the function,
$\oplus$, describing the channel's action,
\begin{align}
    \label{eq:channel}
    Y^n = X^n \oplus \Noise^n,
\end{align}
is invertible, so that knowing the output and input the noise can be recovered:
\begin{align}
    \label{eq:channel_invert}
    X^n=Y^n \ominus \Noise^n.
\end{align}
To implement Maximum-Likelihood (ML) decoding,
the sender and receiver first share a code-book
$\cC_n=\{c^{n,1},\ldots,c^{n,M_n}\}$ consisting of $M_n$ elements
of $\A^n$. For a given channel output $y^n$, 
denote the conditional probability of the received sequence for
each code-word in the code-book by
\begin{align}
    \label{eq:cond}
    p(y^n|c^{n,i}) = P(y^n=c^{n,i}\oplus \Noise^n) \text{ for } i\in\{1,\ldots,M_n\}.
\end{align}
The decoding produced by GRAND is then an element of the code-book that has the
highest conditional likelihood of transmission given what was
received,
\begin{align}
    \label{eq:straight_MLE}
    c^{n,*}\in\arg\max \left\{ p(y^n|c^{n,i}): c^{n,i}\in\cC_n\right\}
    =\arg\max \left\{ P(\Noise^n=y^n\ominus c^{n,i}):c^{n,i}\in\cC_n\right\},
\end{align}
where we have used the invertibility of $\oplus$ for the final equality.

Code-book sizes are typically exponential in the block length $n$,
$M_n \sim |\A|^{nR}$ and, taking logs throughout the article as
base $|\A|$, we define the normalized rate of the code-book to be
$R=\lim_n 1/n \log(M_n)$. Thus ML decoding would appear to be
infeasible in practice for reasonable rates as it would seem that
the receiver would have to either: A) perform $|\A|^{nR}$ conditional
probability computations described in equation \eqref{eq:cond}
followed by a rank ordering every time a signal is received; or B),
in advance, perform $|\A|^{n(R+1)}$ computations described in
equation \eqref{eq:cond}, one for every $(c^{n,i},y^n)$ pair, storing
in a look-up table the resulting $|\A|^n$ ML decodings, one for
each possible received sequence.

In the present paper we consider a distinct algorithm for ML decoding.
The principle underlying the approach is for the receiver to
rank-order noise sequences from most likely to least likely and
then sequentially query whether the sequence that remains when the
noise is removed from the received signal is an element of the
code-book.  For the channel structure described above, irrespective
of how the code-book is constructed, the first
instance where the answer is in the affirmative corresponds to the
ML decoding. More formally, the receiver first creates an ordered
list of noise sequences, $\GN:\A^n\mapsto\{1,\ldots,|\A|^n\}$, from
most likely to least likely, breaking ties arbitrarily:
\begin{align}
\label{eq:noise_order}
\GN(z^{n,i})\leq \GN(z^{n,j}) \text{ if and only if } P(\Noise^n=z^{n,i})\geq P(\Noise^n=z^{n,j}),
\end{align}
where throughout this article lower case letters correspond to
realizations of upper-case random variables, apart from for noise
where $z$ is used as $n$ denotes the code block-length.
For each received signal, the receiver executes the following algorithm, which we call GRAND for Guessing Random Additive Noised Decoding:
\begin{itemize}[leftmargin=*]
\item Given channel output $y^n$, initialize $i=1$ and set $z^n$
to be the most likely noise sequence, i.e. the $z^n$ such that
$\GN(z^n)=i$.
\item While $x^n=y^n\ominus z^n\notin\cC_n$, increase $i$ by $1$
and set $z^n$ to be the next most likely noise sequence, i.e. the
$z^n$ such that $\GN(z^n)=i$.
\item The $x^n$ that results from this while loop is the decoded
element.
\end{itemize}
An example of this process is described in Table \ref{table:inner_code}. 

To see that GRAND corresponds to ML decoding for channels
of the sort described in equations \eqref{eq:channel} and
\eqref{eq:channel_invert}, note that, owing to the definition of
$c^{n,*}$ in equation \eqref{eq:straight_MLE},
\begin{align*}
    P(\Noise^n=y^n\ominus c^{n,*})\geq P(\Noise^n=y^n\ominus c^{n,i}) \text{ for all } c^{n,i}\in\cC_n.
\end{align*}
Thus the scheme does, indeed, identify an ML decoding. The premise
of the present paper is that there are circumstances when this new
scheme, GRAND, has a complexity that decreases as code-book rate increases even though the more direct approach
described in equation \eqref{eq:straight_MLE} sees steeply increasing complexity. 

\begin{table*}[t]
\begin{center}
\begin{tabular}{|c||c|c|c|c|c|c|c|}
\hline
 Noise guessing order & 
 1 & 2 & 3 & 4 & 5 & 6 & $\ldots$  \\ 
 \hline
 Noise from most likely to least likely & 
 $z^{n,1}$ & $z^{n,2}$ & $z^{n,3}$ & $z^{n,4}$ & $z^{n,5}$ &$z^{n,6}$ & $\ldots$ \\  
  \hline
 String queried for membership of the code-book $\cC_n$
 & $y^n\ominus z^{n,1}$ & $y^n\ominus z^{n,2}$ &  $y^n\ominus z^{n,3}$ & $y^n\ominus z^{n,4}$ & $y^n\ominus z^{n,5}$ & $y^n\ominus z^{n,6}$  & $\ldots$ \\
 \hline
 Location of code-book elements &
 & & $c^{n,i_1}$ & & $c^{n,i_2}$ & & $\ldots$ \\
 \hline
\end{tabular}
\end{center}
\caption{Description of ML decoding by  GRAND. The receiver
creates a rank-ordered list of noise from most likely to least
likely breaking ties arbitrarily, $z^{n,1},z^{n,2},\ldots$. In that
order, given a received signal $y^n$, the receiver sequentially
subtracts the noise $z^{n,i}$ and queries if the string that results,
$y^n\ominus z^{n,i}$, is an element of the code-book, $\cC_n$. The
first string that is in the code-book, is the ML decoding. In this
example, $c^{n,i_1}$ is the first element of the code-book to be
identified, which occurs on the third noise guess. In Approximate
ML decoding, GRANDAB, after a fixed number of queries the receiver abandons 
the questioning and declares an error.}
\label{table:inner_code}
\end{table*}

In Section \ref{subsec:analysis}, the performance of the algorithm
is established in terms of its maximum achievable rate, which is
a property of ML decoding rather than being particular to the present
GRAND scheme, and the number of computations the receiver must perform
until decoding, which is dependent on the scheme. With some mild
ergodicity conditions imposed on the noise process, we prove that
GRAND is capacity achieving with uniform-at-random code-books.
We determine asymptotic error exponents, as well as providing success
exponents for rates above capacity. We identify the asymptotic
complexity of GRAND in terms of the number of local operations
the receiver must perform per received block in executing the
algorithm.

Based on this new noise-centric design ethos for ML decoding and
the intuition that comes from its analysis, we introduce a new
approximate ML decoder in Section \ref{subsec:AML}, an approach we dub GRANDAB for GRAND with ABandonment. In this
variant of GRAND, the receiver abandons identification of the transmitted
code word if no element of the code-book is identified after a
pre-defined number of noise removal queries. GRANDAB is not a ML decoder
as the algorithm sometimes terminates without returning an element
of the code-book. Despite that, we establish that GRANDAB is also
capacity achieving for random code-books once the abandonment threshold
is set for after all elements of the Shannon Typical Set of
the noise are queried, and we determine the exponent for the
likelihood of abandonment. By abandoning after a fixed number of
queries, an upper-bound on complexity is ensured.

To determine these algorithmic properties, we leverage recent results
in the study of guesswork.  We recall one theorem from the literature
and establish several new ones. As they may appear somewhat
mathematically involved, we begin by explaining the intuitive meaning
behind them.

Theorem \ref{theorem:LDPN} is taken from \cite{Christiansen13} and
provides a Large Deviation Principle (LDP) as the block length,
$n$, becomes large, for the distribution of the logarithm of the
number of guesses needed until the actual noise in the channel is
queried, $\GN(\Noise^n)$. On its own, this result provides us with
an upper-bound on the complexity of the scheme, but it can be
augmented in the case of uniformly selected code-books. That is,
where the input elements $X^n$ in equation \eqref{eq:channel} are
chosen uniformly at random from a code-book $\cC_n$ that itself
consists of a collection of uniformly selected elements of $\A^n$. 

Theorem \ref{theorem:LDPU} is new and
establishes properties of the number of guesses that would be made
until an element of the code-book that was not the channel input
is identified. Here we leverage the fact that for uniformly distributed
code-books the location of each of these elements in the guessing
order outlined in Table \ref{table:inner_code} are uniform in
$\{1,\ldots,|\A|^n\}$. As a result, the distribution of the number
of guesses until any non-input element of the code-book is hit upon
is distributed as the minimum of $M_n$ such uniform random variables.
When $M_n \approx |\A|^{nR}$ and $n$ becomes large, the resulting
minimum is essentially the discretization of an exponential
distribution with rate $|\A|^{-n(1-R)}$ so that the receiver will
identify a code-word in, on average, approximately $|\A|^{n(1-R)}$
guesses.  Note, in particular, that as $R$ increases and the code-book
becomes more dense and efficient, while the number of computations
in the brute-force approach to ML decoding increases, the noise
guessing approach experiences the reverse phenomenon.

The ML decoding algorithm introduced in the present paper is essentially a
race between these two guessing processes. If the number of guesses
required to identify the true noise is less than the number of
guesses to identify any other element of the code-book, then GRAND provides the correct answer on termination. Combining
the two earlier results in two different ways first recovers the
Channel Coding Theorem as Proposition \ref{prop:channel_coding_theorem}
via this new guessing argument. Namely, with $R$ being the normalized
code-book rate, $\HN$ being the normalized Shannon entropy rate of
the noise base $|\A|$, and with $1-\HN$ being the channel capacity,
so long as $R<1-\HN$ then the ML decoder will correctly identify
the input for long enough blocks. The guessing argument provides
asymptotic exponents for the probability that the ML decoding
is an error if the code-book is within capacity, as well as for the
probability that the ML decoding is correct if the code-book rate
is beyond capacity. Both of these error and success exponents are
convex functions of the code-book rate near capacity and approach
zero at capacity, hinting at smooth degradation in performance near
capacity.

Combining Theorems \ref{theorem:LDPN} and \ref{theorem:LDPU} in a
distinct fashion akin to that used in \cite{Christiansen15} to study
multi-user guesswork, Proposition \ref{prop:IMLE} characterizes the
complexity of the scheme in terms of the distribution of the number
of guesses to termination. This approach allows us to determine
some subtle performance features of the scheme when code-books rates
are beyond capacity. Theorem \ref{theorem:WLLN} establishes that
the circumstances beyond capacity under which the ML decoding is
likely to be correct decoding should the noise guessing complete
quickly. In particular, this phenomenon occurs if the code-book
rate is less than one minus the min-entropy rate of the noise.

Interpreting the results of Propositions \ref{prop:channel_coding_theorem}
and \ref{prop:IMLE} in light of the noise guessing algorithm leads
us to propose a new approximate ML decoder, GRANDAB. In GRANDAB, if no
code-book element is identified by the receiver after
$|\A|^{n(\HN+\delta)}$ queries for some $\delta>0$, the receiver
abandons guessing and decoding results in an error. While it is
not an ML decoder, we prove in Proposition \ref{prop:AML} that GRANDAB
is also capacity-achieving for any $\delta>0$. Thus GRANDAB has the
capacity achieving qualities of ML decoding with a guaranteed upper
bound on the number of computations performed by the receiver. This
can result in a significant saving over GRAND in terms of complexity as
the average number of queries required to identify the true noise
in the system grows with an exponent of R\'enyi entropy rate
$1/2$.

In Section \ref{sec:examples} the performance of GRAND and GRANDAB are  illustrated for bursty Markovian channel noise
as, crucially, all of the results in this paper hold for channels
with memory, a point we investigate  in Section \ref{sec:conc}. For
memoryless channels, however, the guessing approach enables finer
approximations to the computation of block error probabilities than
asymptotic exponents and these are used for the BSC in Section
\ref{subsec:finer}. In Section \ref{sec:conc} we conclude with a
discussion of implementation and further potential of the principles
underlying the decoding algorithms introduced here.

\section{Related work}

Large deviation style arguments that are employed to establish error
exponents in both source and channel coding are typically variants
of Sanov's Theorem \cite{Dembo98} and the method of types. If sources are assumed
to have properties such as being independently and identically
distributed (IID) or Markovian, identification of non-asymptotic
pre-factors can be possible. For error exponents in source coding,
these methods have been used extensively, originally for asymptotically
error-free source coding with IID and Markov sources \cite{Ana90,
CK81, DLS81}, and then for variable-length and lossy source coding
of IID and stationary sources \cite{DK08}. For channel coding of
Discrete Memoryless Channels (DMCs), error exponents were first
identified by Gallager \cite{Gallager65} by direct arguments.
In unpublished notes that are available on the web, Montanari and
Forney \cite{MF} provide a relationship between Gallager's error
exponent and the exponent obtained through large deviation
considerations of channel coding arguments using asymptotic
equipartition principles. More recently, an approach along these
lines has been used to study joint source-channel coding \cite{YTWH17}.
As an aside, we remark that an alternate means of establishing the
results in \cite{YTWH17} would have been be to combine the results
of \cite{DK08} with the generalization in \cite{ZAC06} of \cite{Csi80,
Csi82} using a method of types.

While the arguments used in the papers referenced above are essentially
based on variants and refinements of the Large Deviation Principle
(LDP) for empirical measures, we instead analyze our proposed
approach starting from a completely distinct angle: the recently
established LDP \cite{Christiansen13} for Massey's guesswork
\cite{Massey94}. That LDP is a development from earlier results
that identify scaling exponents for moments of guesswork in terms
of R\'enyi entropy rates \cite{Arikan96, Malone04, Pfister04}.
Given the explicit relationship between the guesswork process and
the noise guessing approach, this seems the most natural line of
attack. In \cite{Ari2000} Arikan establishes LDP bounds for conditional probability rank. The full large deviation principle, which we employ here, is proven in \cite{Christiansen13}.

The connection between source coding and guesswork was first noted
by Arikan and Merhav \cite{AM98}, and has been established by Hanawal
and Sundaresan \cite{Hanawal11b}. For channel coding, a connection
between guesswork and error exponent analysis was proved by
Arikan for sequential decoding of tree codes \cite{Arikan96}, such as classic
convolutional codes \cite{Elias54}. Sequential decoding, introduced
by Wozencraft \cite{Wozencraft57, WR61}, is a variant of ML decoding
for tree codes. To ensure low computational complexity of sequential
decoding of convolutional codes, rates are generally kept
below a computational cutoff rate \cite{Wozencraft57, Fano1963,
Jacobs1967, Falconer69, Jelinek69, Hashimoto79, Arikan88, Arikan96}.
A survey of the historical rationale for cut-off rate design can
be found in \cite{NS94}. Several schemes have sought to improve
the cut-off rate, including Pinsker's concatenated code with an
inner block code and outer sequential decoder \cite{Pinsker65}, as
well as Massey's ``splitting" argument for quaternary erasure channel
\cite{Massey81}. A general framework for designing codes that
increase the cutoff rate is discussed in \cite{Arikan16}. Polar
Coding, which is capacity achieving for binary DMCs \cite{Arikan09},
fits into that framework.

For linear block codes, an ML decoding method exists that has complexity bounded by $2^{n(1-R)}$ (in the current article's notation) \cite{BCJR74}.
As the complexity of brute force ML decoding is $2^{nR}$, that approach is preferable when $1-R<R$, that is when $R>1/2$. For rates below capacity, $R<1-H$ and hence $H<1-R$. GRANDAB's complexity $2^{n(H+\delta)}$, for arbitrary $\delta>0$, is thus lower than the one provided by \cite{BCJR74}, except for low code rates where the complexity of brute force ML decoding is preferable. The approach taken in \cite{BCJR74} is based on a trellis decoding method for linear convolutional codes akin to the one independently derived in \cite{McA74}, in which terminated, or so called blocked, convolutional codes are also considered.

In the present work, we do not envisage designing codes, but using
random ones. For the channels we are considering, Shannon's
\cite{Shannon48} uniform random code-book plus ML decoding argument
affords capacity, but for codes of sufficient length that approach
capacity, decoding methods for random codes are prohibitively complex
with existing methods, as explained in the introduction. The core
performance idea here is to leverage the fact that the noise is
typically highly non-uniform, rendering its identification through
guessing less onerous than performing a computation for every element
of the code-book.

While our model employs uniformly distributed code-words, we analyze
substantially more general noise processes than the DMC. For the
DMC, the error exponent we derive necessarily matches Gallager's.
That is unsurprising, as he proves it was tight for the average code
\cite{Gal73}, and this fact has recently generalized to random
linear codes \cite{DZF16} for channels for which uniform code-books
are optimal. As an aside, we remark that the result in \cite{Gal73}
is echoed in the source coding domain in \cite{DK08}, which shows,
via asymptotic equipartition style arguments, that almost all random
code-books provide in effect the same compression performance. Thus,
one might suspect that results analogous to those in \cite{DZF16}
are likely to hold also for source coding \cite{Csi82b} and network
coding \cite{Hoetal06, EMHRKKH03}.

The mathematical approach we take naturally lends itself to the
determination of decay exponents in the probability of success when
coding above capacity. The question of success for codes operating
above capacity is a long-standing, though perhaps less well studied
than that of errors below capacity \cite{Wol61, Gal68, Ebe66}. For
a DMC, lower bounds \cite{Ari73} that are coincident with upper
bounds \cite{DK79} are known to exist. Here, the derivation of these
exponents come hand-in-hand with the determination of error exponents,
and hold for the same broad class of noise processes. 

GRAND employs ordered statistics of noise for decoding, but the
code-book is only used when checking if a proposed decoded code
word pertains to the code-book. The noise statistics may be obtained
by arbitrary means and are not dependent on examining the decoder's
output. This approach differs from Ordered Statistics Decoding (OSD)
\cite{FL95, KI02}, which uses the statistics derived from syndrome
computations to update soft information in decoding linear bock
codes, or from Turbo-style systems that blend decoding with soft
information, see for instance \cite{HP94, KImetal10, WP99, SSS04}.

As ML decoding is generally too onerous from a complexity perspective,
the use of approximate ML decoding is, under different guises,
almost omnipresent in decoding algorithms. The approach GRANDAB
takes, that of stopping after a given set of guesses, is redolent
of limited search approaches commonly used in the decoding of
convolutional codes, such as reduced state sequence estimation
(RSSE) and related techniques that limit the search space in
sequential decoding \cite{And89, Fos77, AM84,  Sim90, SS90, MRDP94,
EQ89, SS92}. This latter family of techniques uses the received
sequence as a starting point, rather than consider the noise itself
as we do in GRANDAB, and most have not been formally established
to be capacity achieving.

\section{Analysis}
\label{sec:asymptotic}

\subsection{ML decoding by guesswork}
\label{subsec:analysis}

We begin with the assumption we shall make on the noise process.
Recall that $\log$ is taken base $|\A|$ throughout.

Assuming it exists, define the R\'enyi entropy rate of the noise
$\{\Noise^n\}$ process with parameter $\alpha\in(0,1)\cup(1,\infty)$ to be
\begin{align*}
H_\alpha 
        = \lim_{n\to\infty} \frac 1n \frac{1}{1-\alpha}\log\left(\sum_{z^n\in\A^n} P(\Noise^n=z^n)^\alpha\right),
\end{align*}
with $\HN=H_1$ being the Shannon entropy rate of the noise. 
Denote the min-entropy rate of the noise by $\Hmin =
\lim_{\alpha\to\infty} H_\alpha$.

\begin{assumption}
\label{ass:N}

We assume that
\begin{align}
\label{eq:LambdaN}
\LambdaN(\alpha)
=\lim_{n\to\infty}\frac 1n \log \E(\GN(\Noise^n)^\alpha)
=
\begin{cases}  
	\displaystyle\alpha  H_{1/(1+\alpha)}
	& \text { for } \alpha\in(-1,\infty)\\
	-\Hmin & \text{ for } \alpha\leq-1,
\end{cases}
\end{align}
and that the derivative of $\LambdaN(\alpha)$ is continuous on the range
$\alpha\in(-1,\infty)$. 
\end{assumption}
Assumption \ref{ass:N} is known to be satisfied for a broad range
of sources including i.i.d. \cite{Arikan96}, Markovian \cite{Malone04},
a large class of general, stationary processes \cite{Pfister04} and
others \cite{Hanawal11}; the condition for $\alpha\leq-1$ is established
for all of these in \cite{Christiansen13}. 

Note that, by setting $\alpha=1$, as first identified by Arikan
\cite{Arikan96}, from equation \eqref{eq:LambdaN} one has that the
average number of guesses required to identify the true noise grows
exponentially in block size, $n$, with R\'enyi entropy rate at
parameter $1/2$, $\Hhalf$, which is no smaller than the Shannon
entropy rate, $\HN$, of the noise. Note also that $\LambdaN(\alpha)$
has a continuous derivative everywhere except potentially at
$\alpha=-1$.  An operational meaning to the discontinuous derivative
when evaluated from above is identified in \cite{Christiansen13},
where the value of the discontinuity captures the exponential growth
in $n$ of the size of the set of most-likely noise sequences.

\begin{figure*}
\begin{center}
\includegraphics[width=0.80\textwidth]{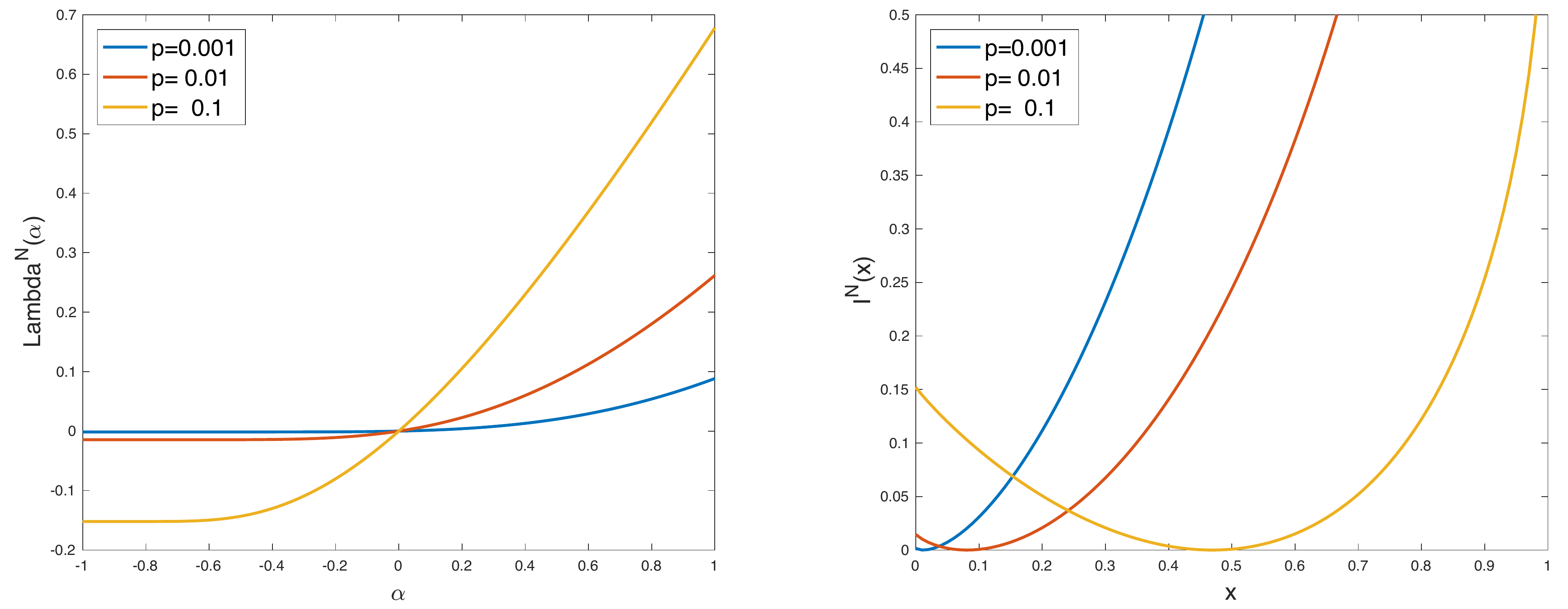}
\end{center}
\caption{Guesswork rate function. Example: $\A=\{0,1\}$, 
BSC channel,
noise $\Noise^n$ made of i.i.d. Bernoulli symbols with
$P(\Noise^1=1)=p\in(0,1)$. Left panel: scaled cumulant generating
function, $\LambdaN$, for the noise process $\{1/n\log\GN(\Noise^n)\}$
as determined by the explicit expression in \eqref{eq:BSC_K}. Right
panel: rate function for the same process, $\IN$, defined in equation
\eqref{eq:IK} and determined numerically. Roughly speaking $\log
P(n^{-1} \log\GN(\Noise^n) \approx x)\approx -n\IN(x)$.  Note that
$\IN(0)=\Hmin$, the min-entropy of the noise, and that the rate
function is zero at the Shannon entropy of the noise, $\IN(\HN)=0$.}
\label{fig:1}
\end{figure*}

\begin{example*}
For a BSC with $\A=\{0,1\}$ and an
additive channel mod 2, and $P(\Noise^1=1)=p$,
\begin{align}
\LambdaN(\alpha) = \begin{cases} 
	(1+\alpha)\log\left((1-p)^{\frac{1}{1+\alpha}} + p^{\frac{1}{1+\alpha}}\right) & \text{ if } \alpha\in(-1,\infty)\\
	\log(\max(1-p,p)) & \text{ if } \alpha \leq-1.
	\end{cases}
	\label{eq:BSC_K}
\end{align}
Plots of $\LambdaN(\alpha)$ can be found in Fig. \ref{fig:1}.
\end{example*}

From equation \eqref{eq:LambdaN}, $\LambdaN$ can be identified as
the scaled cumulant generating function for the process
$\{1/n\log\GN(\Noise^n)\}$ \cite{Dembo98} and so $\LambdaN$ is
necessarily convex. Moreover, that identification suggested that
this process may satisfy a Large Deviation Principle (LDP)
\cite{AM98, Sundaresan07}, which is proved in \cite{Christiansen13} and used in \cite{Christiansen13b, Beietal15, Chretal13, Beietalb15}.

\begin{theorem}[LDP for Guessing the Noise \cite{Christiansen13}]
\label{theorem:LDPN}
Under Assumption \ref{ass:N},
$\{1/n \log \GN(\Noise^n)\}$ satisfies the Large Deviation Principle with 
the convex lower-semi continuous rate function,
$\IN:[0,1]\to[0,\infty]$,
\begin{align}
    \IN(x) &:= \sup_{\alpha\in\R} \{x\alpha -\LambdaN(\alpha)\},
	\label{eq:IK}
\end{align}
which is the Legendre-Fenchel transform of $\LambdaN$.  

In particular: $\IN(0)=\Hmin$, the min-entropy rate of the noise;
$\IN(x)$ is linear on $[0,\TURN]$, where $\TURN:=\lim_{\alpha
\downarrow -1}d/d\alpha\,\LambdaN(\alpha)$, and then strictly convex
thereafter while finite; and $\IN(x)=0$ if and only if $x=\HN$, the
Shannon entropy rate of the noise.
\end{theorem}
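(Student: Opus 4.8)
The plan is to deduce the whole statement from the G\"artner--Ellis theorem together with a direct estimate at the top of the guessing list. By Assumption~\ref{ass:N}, $\LambdaN$ is exactly the limiting scaled cumulant generating function of $\{n^{-1}\log\GN(\Noise^n)\}$ and is finite on all of $\R$, and since each $n^{-1}\log\GN(\Noise^n)$ lies in the compact interval $[0,1]$ the sequence is automatically exponentially tight. The G\"artner--Ellis theorem then gives at once the large deviation upper bound for every closed set, with the convex, good rate function $\IN=(\LambdaN)^*$ of \eqref{eq:IK}, and the lower bound for every open set $O$ with $\inf_O\IN$ weakened to $\inf_{O\cap\mathcal F}\IN$, where $\mathcal F$ is the set of exposed points of $\IN$. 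What remains is to identify $\mathcal F$, to recover the lower bound on the part of $[0,1]$ that $\mathcal F$ misses, and then to read the structural claims off $\LambdaN$. (For i.i.d.\ or Markov noise one could instead run Sanov's theorem and the contraction principle through the map sending an empirical type to the exponential rate of its likelihood rank, but that route does not reach the generality of Assumption~\ref{ass:N}.)

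For the exposed points: $\LambdaN$ is $C^1$ on $(-1,\infty)$ with $(\LambdaN)'(\alpha)\downarrow\TURN$ as $\alpha\downarrow-1$ and $(\LambdaN)'(\alpha)$ increasing to the log-support-size rate of the noise as $\alpha\uparrow\infty$; combined with strict convexity of $\LambdaN$ on $(-1,\infty)$ (which holds for the noise processes covered by Assumption~\ref{ass:N}), this makes every $x$ strictly between $\TURN$ and that rate an exposed point, exposed by the hyperplane of slope $\alpha^*(x)$ solving $(\LambdaN)'(\alpha^*(x))=x$. Hence the G\"artner--Ellis lower bound is already tight for any open set meeting $(\TURN,\cdot)$.

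The hard part is the interval $[0,\TURN]$. Here $\LambdaN$ has a corner at $\alpha=-1$, with left slope $0$ and right slope $\TURN$, so $\IN$ is affine on $[0,\TURN]$, no point of it is exposed, and the G\"artner--Ellis lower bound is vacuous there. I would close the gap by working directly with the ordered list: $\{\GN(\Noise^n)\le|\A|^{nx}\}$ is exactly the event that $\Noise^n$ is one of the $\lceil|\A|^{nx}\rceil$ most likely noise strings, and the total probability of that top block is, to exponential order, $|\A|^{-n(\Hmin-x)}$ for $x\in[0,\TURN]$. This rests on two inputs: the largest string probability is $|\A|^{-n\Hmin+o(n)}$ by the definition of the min-entropy rate, and the number of (near-)most-likely strings grows like $|\A|^{n\TURN}$, which is the operational meaning of $\TURN$ recorded in \cite{Christiansen13}; using the sharp, two-sided version of this estimate and subtracting yields the matching lower bound for open sub-intervals of $[0,\TURN]$. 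Controlling the mass at the top of the guessing list --- equivalently, compensating for the lack of steepness of $\LambdaN$ at its corner --- is where the real work lies; the corresponding upper bound there is routine. Combining the two regimes gives the LDP.

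The listed properties of $\IN$ are then convex analysis on $\LambdaN$. Since $\LambdaN$ attains its infimum $-\Hmin$ on $(-\infty,-1]$, $\IN(0)=\sup_\alpha(-\LambdaN(\alpha))=\Hmin$. For $x\in[0,\TURN]$ the supremum in \eqref{eq:IK} is attained at $\alpha=-1$ --- on $(-\infty,-1]$ the map $\alpha\mapsto\alpha x+\Hmin$ is nondecreasing when $x\ge0$, while on $[-1,\infty)$ the $\alpha$-derivative of $\alpha x-\LambdaN(\alpha)$ is $x-(\LambdaN)'(\alpha)\le x-\TURN\le0$ --- giving $\IN(x)=\Hmin-x$, linear. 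Strict convexity of $\IN$ beyond $\TURN$ while finite is the conjugate counterpart of $\LambdaN$ being differentiable and strictly convex on $(-1,\infty)$. Finally $\LambdaN(0)=\lim_n n^{-1}\log\E(\GN(\Noise^n)^0)=0$, so $\inf_x\IN(x)=-\LambdaN(0)=0$, attained precisely at $x=(\LambdaN)'(0)$; differentiating $\alpha\mapsto\alpha H_{1/(1+\alpha)}$ at $\alpha=0$ gives $(\LambdaN)'(0)=H_1=\HN$, and since $(\LambdaN)'$ is strictly increasing with $(\LambdaN)'(-1^+)=\TURN<\HN$, the value $\HN$ sits strictly inside the strictly convex region, so it is the unique zero of $\IN$.
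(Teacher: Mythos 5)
The paper does not actually prove Theorem~\ref{theorem:LDPN}: it is imported verbatim from \cite{Christiansen13}, so there is no internal proof to compare against and your attempt has to be judged as a reconstruction of that reference's argument. In outline it has the right shape, and it does match the known strategy: the G\"artner--Ellis upper bound and exposed-point lower bound handle the strictly convex part of $\IN$, while the flat stretch $[0,\TURN]$, where no point is exposed because of the corner of $\LambdaN$ at $\alpha=-1$, must be treated by a direct estimate of the probability mass sitting at the top of the guessing list. Your convex-analysis identifications ($\IN(0)=\Hmin$, linearity $\IN(x)=\Hmin-x$ on $[0,\TURN]$ via the maximizer $\alpha=-1$, the zero at $\HN$) are essentially correct.

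The genuine gap is precisely the step you flag as ``where the real work lies'' and then do not do. The two-sided estimate $P(\GN(\Noise^n)\leq|\A|^{nx})=|\A|^{-n(\Hmin-x)+o(n)}$ for $x\in[0,\TURN]$ needs both that $P(\GN(\Noise^n)=1)=|\A|^{-n\Hmin+o(n)}$ and that the number of strings whose probability is within a subexponential factor of the maximum grows at exponential rate $\TURN$; neither is immediate from Assumption~\ref{ass:N} (the $\alpha\leq-1$ limit alone only gives one-sided control, since $\E(\GN(\Noise^n)^\alpha)$ mixes the top probability with non-decaying contributions from later ranks). Appealing to the ``operational meaning'' of the derivative discontinuity ``recorded in \cite{Christiansen13}'' is circular here, because that operational meaning is itself a conclusion of the analysis you are reconstructing; extracting it from the behavior of $\LambdaN$ near its corner is the real technical content of the cited proof. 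Two smaller points: strict convexity of $\LambdaN$ on $(-1,\infty)$ is not part of Assumption~\ref{ass:N} (only continuity of the derivative is assumed), so your delimitation of the exposed points needs to accommodate possible affine stretches of $\LambdaN$; and uniqueness of the zero of $\IN$ is simpler than your argument suggests --- since $\LambdaN$ is differentiable at $0$ with $(\LambdaN)'(0)=\HN$, one has $\IN(x)=0$ iff $x\in\partial\LambdaN(0)=\{\HN\}$, with no need for $(\LambdaN)'$ to be strictly increasing or for $\TURN<\HN$, which can fail in degenerate cases such as uniform noise where $\IN$ is linear on all of $[0,1]$.
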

Intuitively, this result says that,  for fixed  $(a, b)$, as $n$ goes to infinity
\begin{align*}
\log P\left(\frac1n \log \GN(\Noise^n) \in(a,b)\right) \approx -n\inf_{x\in(a,b)} \IN(x)
\end{align*}
for large $n$. As well as providing this approximation, one of the
primary advantages of a LDP over knowing how moments scale from
$\LambdaN$ is that it is covariant in the sense that LDPs are
preserved by continuous maps \cite{Dembo98}[Theorem 4.2.1], and we
shall repeatedly use that property to combine distinct LDPs.

\begin{example*}
While there is no closed form expression for $\IN$ for the BSC, it
can be readily computed numerically and examples are provided in
Fig. \ref{fig:1}.
\end{example*}

For random code-books, the second theorem provides a LDP for the
number of guesses on the noise that will be made until identifying
an element of the code-book that is not the transmitted code-word.
The key realization is that if elements of the code-book have been
selected uniformly at random, the location of the non-transmitted
code-book elements in the ordered list of noise guesses are also
uniform. 
Let $U^{n,1},\ldots,U^{n,M_n}$ be independent random
variables, each uniformly distributed in $\{1,\ldots,|\A|^n\}$ and 
define 
\begin{align*}
U^n =\min_i U^{n,i}.
\end{align*}

\begin{assumption}
\label{ass:U}
Assume that $M_n=\lfloor |\A|^{nR}\rfloor$ for some $R>0$.
\end{assumption}

\begin{theorem}[LDP for Guessing a Non-transmitted Code-word]
\label{theorem:LDPU}
Under Assumption \ref{ass:U}, as $n$ becomes large, $U^n$ 
is approximately exponentially distributed with rate $|\A|^{-n(1-R)}$,
\begin{align}
\label{eq:finerU}
\lim_{n\to\infty} P(|\A|^{-n(1-R)} U^n>x) = e^{-x} \text{ for all } x>0.
\end{align}
Moreover, $\{1/n \log U^{n}\}$ satisfies
the large deviation principle with lower semi-continuous rate function
\begin{align}
    \IU(x) = \begin{cases}
    1-R-x & \text{ if } x\in[0,1-R]\\
    +\infty & \text{ otherwise}
    \end{cases}
\label{eq:IU}
\end{align}
and 
\begin{align*}
\lim_{n\to\infty} \frac 1n \log \E(U^n) = 1-R.
\end{align*}
\end{theorem}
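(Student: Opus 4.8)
The plan is to base everything on one elementary exact identity: since $U^{n,1},\dots,U^{n,M_n}$ are independent and uniform on $\{1,\dots,|\A|^n\}$, for every real $k$ with $0\le k<|\A|^n$,
\[
P(U^n>k)=\Big(1-\frac{\lfloor k\rfloor}{|\A|^n}\Big)^{M_n},
\]
while $P(U^n>k)=0$ for $k\ge|\A|^n$; all three assertions are then asymptotic analysis of this expression, carried out in the non-degenerate regime $R\in(0,1)$ (for $R\ge1$ there are more code-book elements than points, $U^n\to1$, and the statement is trivial). For the exponential limit \eqref{eq:finerU} I would put $k=x\,|\A|^{n(1-R)}$: with $M_n=|\A|^{nR}+O(1)$ and $\lfloor k\rfloor/|\A|^n=x|\A|^{-nR}+O(|\A|^{-n})$ one gets $M_n\,\lfloor k\rfloor/|\A|^n\to x$ and $M_n(\lfloor k\rfloor/|\A|^n)^2\to0$, so $M_n\log(1-\lfloor k\rfloor/|\A|^n)\to-x$ via $\log(1-u)=-u+O(u^2)$. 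This is just the classical fact that the minimum of $M_n$ uniforms on $\{1,\dots,|\A|^n\}$, suitably rescaled, converges to an $\mathrm{Exp}(1)$ variable; the one thing to watch is the floor bookkeeping, which is where $R\in(0,1)$ is used to discard the error terms.

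For the LDP I would use that $\{n^{-1}\log U^n\}$ is supported on the compact interval $[0,1]$, so it suffices (\cite{Dembo98}) to verify, for each $x$ and $B(x,\delta)=(x-\delta,x+\delta)$, that $\lim_{\delta\downarrow0}\limsup_n n^{-1}\log P(n^{-1}\log U^n\in B(x,\delta))\le-\IU(x)$ together with the matching liminf lower bound. From the identity, $P(U^n\le|\A|^{ny})=1-\exp\!\big(-M_n|\A|^{-n(1-y)}(1+o(1))\big)$, and $M_n|\A|^{-n(1-y)}=|\A|^{n(y-(1-R))}(1+o(1))$; hence this probability is $|\A|^{-n(1-R-y)}(1+o(1))$ for $y<1-R$, tends to $1-e^{-1}$ for $y=1-R$, and the complement $P(U^n>|\A|^{ny})$ decays faster than every $e^{-cn}$ for $y>1-R$. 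Writing $P(n^{-1}\log U^n\in B(x,\delta))=P(U^n\le|\A|^{n(x+\delta)})-P(U^n\le|\A|^{n(x-\delta)})$ up to a negligible endpoint correction, and using that $y\mapsto1-R-y$ is decreasing so that the larger argument dominates, the displayed limits evaluate to $-(1-R-x)$ for $x\in[0,1-R)$, to $0$ for $x=1-R$, and to $-\infty$ for $x>1-R$, i.e.\ exactly $-\IU(x)$; compactness of $[0,1]$ upgrades the local bounds to the full LDP. A separate routine check confirms $\IU$ is lower semicontinuous with compact sublevel sets.

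For the mean I would argue directly: $\E(U^n)=\sum_{k=1}^{|\A|^n}P(U^n\ge k)=\sum_{j=0}^{|\A|^n-1}(1-j/|\A|^n)^{M_n}$, and since $u\mapsto(1-u)^{M_n}$ is nonincreasing, comparing the sum with $\int_0^{|\A|^n}(1-t/|\A|^n)^{M_n}\,dt=|\A|^n/(M_n+1)$ gives $|\A|^n/(M_n+1)\le\E(U^n)\le1+|\A|^n/(M_n+1)$, whence $n^{-1}\log\E(U^n)\to1-R$. One could instead apply Varadhan's lemma to the bounded functional $n^{-1}\log U^n$, obtaining $\sup_{x\in[0,1-R]}(x-(1-R-x))=1-R$, but the sandwich is cleaner and keeps the three parts independent.

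No step here is deep. The points that need genuine care are the floor/ceiling bookkeeping in getting from $M_n=\lfloor|\A|^{nR}\rfloor$ and $\lfloor k\rfloor$ to the clean exponent $1-R-y$, and the recognition that for $y>1-R$ the tail decays super-exponentially, forcing $\IU(x)=+\infty$ rather than a finite value --- which is precisely why I would phrase the LDP verification through local ball estimates on the compact space $[0,1]$, as this avoids the slightly awkward ``gap'' argument that a direct treatment of closed sets straddling $x=1-R$ would require.
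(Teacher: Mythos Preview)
Your proposal is correct and, for the exponential limit and the LDP, follows essentially the same route as the paper: both start from the exact survival identity $P(U^n>k)=(1-\lfloor k\rfloor/|\A|^n)^{M_n}$ and verify the LDP via local ball estimates on the compact space $[0,1]$, splitting into the regimes $x<1-R$, $x=1-R$, and $x>1-R$ (the paper phrases the middle step through the Binomial-theorem limit $(1-|\A|^{n(x-1)})^{|\A|^{nR}}/(1-|\A|^{n(R+x-1)})\to1$, you through $1-\exp(-M_n|\A|^{-n(1-y)}(1+o(1)))$, but these are the same computation).

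The one genuine difference is the mean: the paper derives $\lim_n n^{-1}\log\E(U^n)=1-R$ by invoking Varadhan's theorem on the LDP just proved, whereas you give a self-contained Riemann-sum sandwich $|\A|^n/(M_n+1)\le\E(U^n)\le1+|\A|^n/(M_n+1)$. Your argument is more elementary and decouples the mean result from the LDP (so it would survive even if one only wanted the mean asymptotics), at the cost of an extra line of calculus; the paper's route is shorter once the LDP is in hand and illustrates the standard LDP-to-moments pipeline. Both are entirely sound.
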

\begin{proof}
We begin by observing that
\begin{align*}
    P\left(U^{n}>|\A|^{xn}\right) = 
    \prod_{i=1}^{M_n} P\left(U^{n,i}> |\A|^{xn}\right)
	 =\left(1-\frac{\lceil|\A|^{xn}\rceil}{|\A|^n}\right)^{M_n}.
\end{align*}
Setting $x=1-R$ and making use of L'Hospital's rule, by assumption
\ref{ass:U} we have that for $y>0$
\begin{align*}
\lim_{n\to\infty} P(|\A|^{-n(1-R)} U^n>y) = 
\lim_{n\to\infty} \left(1-y|\A|^{-nR}\right)^{|\A|^{nR}}=e^{-y},
\end{align*}
giving equation \eqref{eq:finerU}.

As $[0,1]$ is compact, in order to establish the LDP it is sufficient
\cite{Dembo98} to prove that
\begin{align}
    \lim_{\epsilon\downarrow 0}\liminf_{n\to\infty} \frac1n \log P\left(\frac1n \log U^{n} \in (x-\epsilon,x+\epsilon)\right) 
    = \lim_{\epsilon\downarrow 0}\limsup_{n\to\infty} \frac1n \log P\left(\frac1n \log U^{n} \in (x-\epsilon,x+\epsilon)\right)
    =-\IU(x) 
	\label{eq:coincident}
\end{align}
for all $x\in[0,1]$.
Using the earlier observation, we have the following limiting
equality for the survival function
\begin{align*}
\lim_{n\to\infty} \frac 1n \log P\left(\frac 1n\log U^{n}>x\right) 
	&= 
	 \lim_{n\to\infty} \frac{M_n}{n} \log \left(1-\frac{\lceil|\A|^{xn}\rceil}{|\A|^n}\right)
	= 
	 \lim_{n\to\infty} \frac{|\A|^{nR}}{n} \log \left(1-|\A|^{n(x-1)}\right)\nonumber\\
	&=
    	 -\lim_{n\to\infty}  \frac1n |\A|^{n(R+x-1)}
	=
	\begin{cases}
    	0 & \text{ if } x\in[0,1-R]\\
    	-\infty & \text{ if } x\in(1-R,1).
	\end{cases}
\end{align*}
From this, we can confirm the veracity of equation \eqref{eq:coincident}
for all $x\in(1-R,1]$:
\begin{align*}
    \lim_{\epsilon\downarrow 0}\lim_{n\to\infty} \frac1n \log P\left(\frac1n \log U^{n} \in (x-\epsilon,x+\epsilon)\right) 
	\leq \lim_{\epsilon\downarrow 0}\lim_{n\to\infty} \frac1n \log P\left(\frac1n \log U^{n} > x-\epsilon\right) =-\infty.\\
\end{align*}
The corresponding equality for the cumulative distribution function can be
obtained by first noting that, by the Binomial theorem,
\begin{align*}
\lim_{n\to\infty} \frac{\left(1-|\A|^{n(x-1)}\right)^{|\A|^{nR}}}{1-|\A|^{n(R+x-1)}} = 1
\text{ if } x\in[0,1-R),
\end{align*}
while if $x=1-R$ the limit of the numerator in the above equation is $\exp(-1)$.
Thus to prove that equation \eqref{eq:coincident} holds for $x\in[0,1-R]$, we have
\begin{align*}
\lim_{\epsilon\downarrow 0}\lim_{n\to\infty} \frac1n \log P\left(\frac1n \log U^{n} \in (x-\epsilon,x+\epsilon)\right)
&=\lim_{\epsilon\downarrow 0}\lim_{n\to\infty} \frac1n \log \left(P\left(\frac1n \log U^{n} < x+\epsilon\right) 
	-P\left(\frac1n \log U^{n} \leq x-\epsilon\right)\right)\\
&=\lim_{\epsilon\downarrow 0} \lim_{n\to\infty} \frac{1}{n} \log \left( \left(1- \left(1-\frac{\lceil|\A|^{(x+\epsilon)n}\rceil}{|\A|^n}\right)^{M_n}\right) \right. \\
&\qquad\qquad\qquad\qquad \left. -\left(1- \left(1-\frac{\lceil|\A|^{(x-\epsilon)n}\rceil}{|\A|^n}\right)^{M_n}\right) \right)\\
&=\lim_{\epsilon\downarrow 0}\lim_{n\to\infty} \frac1n \log \left(|\A|^{n(\min(R+x+\epsilon-1,0))}-|\A|^{n(R+x-\epsilon-1)}\right)\\
&=-(1-R-x),
\end{align*}
as $R+x-\epsilon-1<0$ for $x\in[0,1-R]$.

The scaling result for the mean of $U^n$ follows from the application
of Varadhan's Theorem \cite{Dembo98}[Theorem 4.3.1], giving
\begin{align*}
\lim_{n\to\infty} \frac 1n \log \E(U^n)
	 = \sup_{x\in[0,1-R]}\left(x-\IU(x)\right) = 1-R.
\end{align*}
\end{proof}

\begin{figure}
\begin{center}
\includegraphics[width=0.4\textwidth]{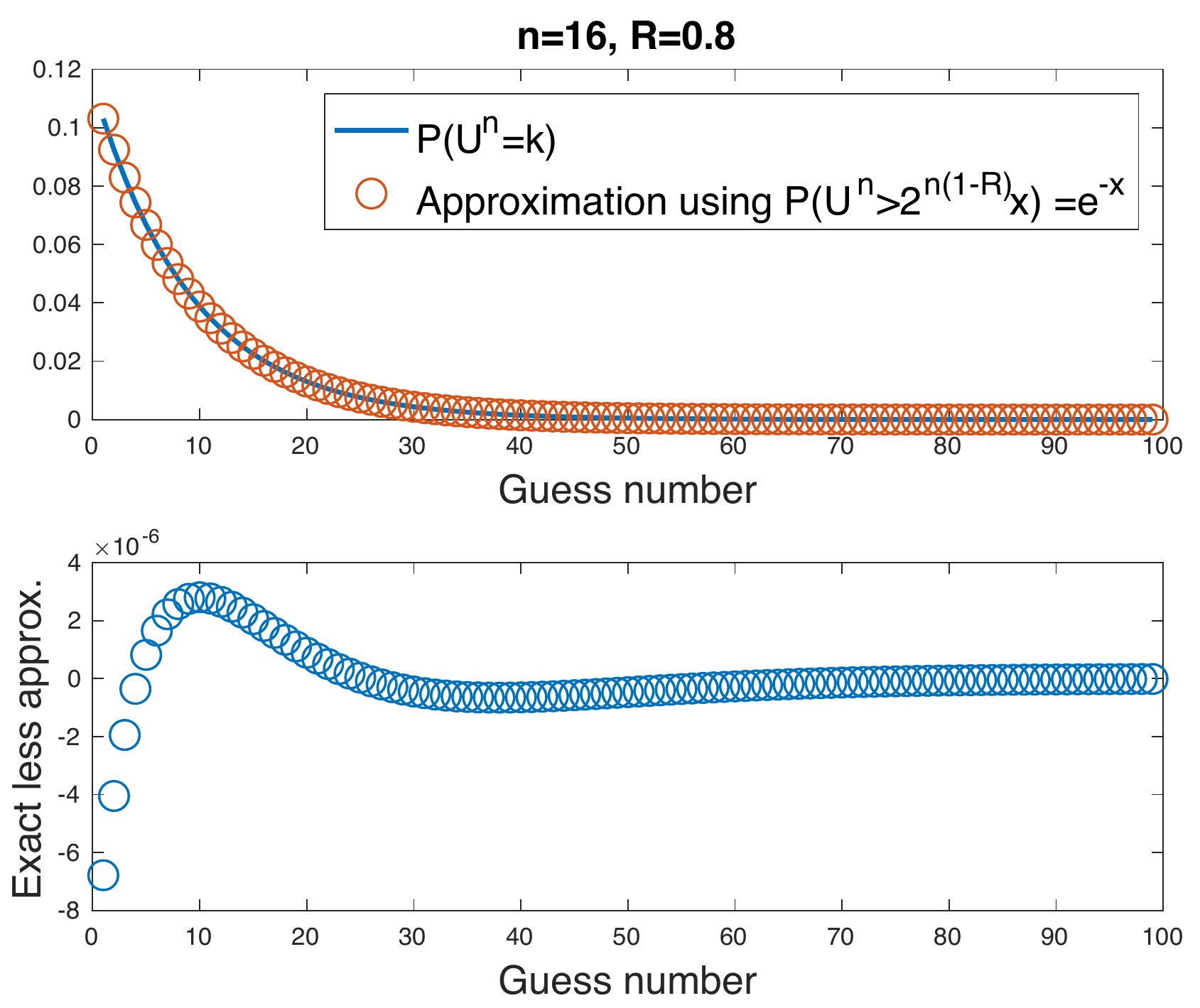}
\end{center}
\caption{Example: $\A=\{0,1\}$, block length $n=16$ and 
$R=4/5$. Upper panel: compares exact computation of
$P(U^n=k)$ (blue line) with the exponential distribution
approximation given in equation \eqref{eq:finerU} (orange
circles) for first $100$ guesses. Lower panel: the difference
between the exact and approximate values.
}
\label{fig:6_supp}
\end{figure}

Equation \eqref{eq:finerU} provides a highly accurate approximation of
the distribution of $U^n$, that it is essentially exponentially
distributed with rate $|\A|^{-n(1-R)}$ giving rise to a mean of
$|\A|^{n(1-R)}$. This is illustrated in Fig. \ref{fig:6_supp} for
a block length of $n=16$ and a  code-book of rate $R=4/5$, and becomes
more precise as $n$ increases. We will use this approximation to
make near exact computations of block error probabilities and
complexity for the BSC in Section \ref{subsec:finer}. To establish
the general channel coding and complexity results, however, it is
the LDP that is needed.  On the scale of large deviations, Theorem
\ref{theorem:LDPU} effectively says that, for large $n$, the first
non-transmitted code-word will be encountered in no more than order
$|\A|^{n(1-R)}$ guesses.

Combining Theorems \ref{theorem:LDPN} and \ref{theorem:LDPU} enables
us to provide a guessing based proof of Channel Coding Theorem.
Recalling that logarithms are taken base $|\A|$, let $h$ denote the
Shannon entropy of a random variable and let $I$ denote mutual
information. For channels introduced in equations \eqref{eq:channel}
and \eqref{eq:channel_invert}, capacity is upper bounded by $1-\HN$
as follows:
\begin{align*}
C\leq \limsup_{n\to\infty} \frac 1n \sup I(X^n;Y^n)
  \leq 1- \lim_{n\to\infty} \frac{h(\Noise^n)}{n}
  =1-\HN,
\end{align*}
where we have upper-bounded the entropy rate of the input, $h(X^n)$,
by its maximum, $n$, and used the fact that the channel is invertible
(i.e. equation \eqref{eq:channel_invert}), while the entropy rate
of the noise exists owing to to Assumption \ref{ass:N}. The proposition
that follows establishes, through the use
of a uniform-at-random code-book and GRAND, that this upper bound is achieved for all noise processes satisfying Assumption \ref{ass:N} .
We define the success rate 
\begin{align*}
\success(R) &= -\lim_{n\to\infty} \frac 1n \log P(U^n\geq\GN(\Noise^n)),
\end{align*}
which is the decay rate in the probability of correct decoding, and  evaluate it in the case where the code rate exceeds capacity.

\begin{figure}
\begin{center}
\includegraphics[width=0.49\textwidth]{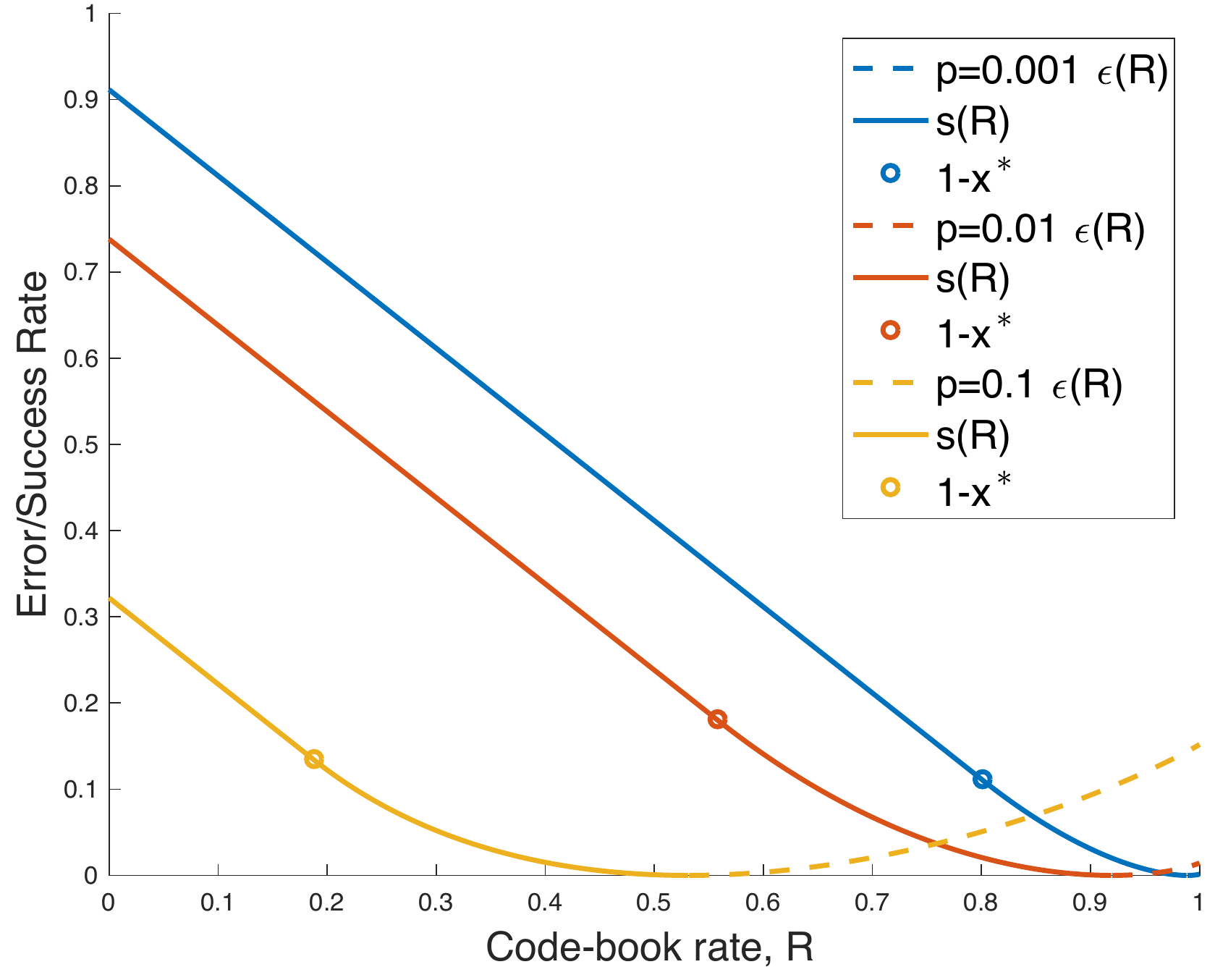}
\end{center}
\caption{GRAND decoding error and success exponents. Example: $\A=\{0,1\}$,
BSC channel, noise $\Noise^n$ made of i.i.d. Bernoulli symbols
with $P(\Noise^1=1)=p\in(0,1)$.  Code-book consisting of $M_n\approx
2^{nR}$ code-words, uniformly selected in $\A^n$. When the code-book
rate, $R$, is less than channel capacity, $1-\HN$, the probability
that a code-word that was not sent is encountered during noise
guessing before the transmitted code-word, $P(U^n<\GN(\Noise^n))$,
decays exponentially in block length $n$ with rate $\error(R)$ given
by the solid line as determined by equation \eqref{eq:error}, which
coincides in this case with Gallager's error exponent. The
point $1-\xstar$ marks the critical rate where the error-rate changes
from linear to strictly convex. For code-books rates that are beyond
capacity, $R>1-\HN$, the probability that the transmitted code-word
is identified before a non-transmitted code-word, $P(\GN(\Noise^n)<U^n)$,
decays exponentially in $n$ with rate $\success(R)$ from equation
\eqref{eq:success}, indicated by the dashed line.  
} 
\label{fig:2}
\end{figure}

\begin{proposition}[Channel Coding Theorem with GRAND]
\label{prop:channel_coding_theorem}
Under Assumptions \ref{ass:N} and \ref{ass:U}, with $\IU$ defined in equation \eqref{eq:IU}
and $\IN$ in equation \eqref{eq:IK}, 
we have the following.

1) If the code-book rate is less than the capacity,
$R<1-\HN$, then 
\begin{align*}
    \lim_{n\to\infty} \frac 1n \log P(U^n\leq\GN(\Noise^n))=
-\inf_{a\in[\HN,1-R]}\{\IU(a)+\IN(a)\}<0,
\end{align*}
so that the
probability that GRAND does not correctly identify the transmitted code-word
decays exponentially in the block length $n$.
If, in addition, $\xstar$ exists such that 
\begin{align}
\label{eq:xstar}
\frac{d}{dx}\IN(x)|_{x=\xstar}=1,
\end{align}
then
the error rate simplifies further to
\begin{align}
\label{eq:error}
    \error(R) &= -\lim_{n\to\infty} \frac 1n \log P(U^n\leq\GN(\Noise^n)) 
	= \begin{cases}
	1-R-\Hhalf & \text{ if } R\in(0,1-\xstar)\\
	\IN(1-R) & \text{ if } R\in[1-\xstar,1-\HN).\\
	\end{cases}
\end{align}
Moreover, 
\begin{align*}
    \success(R) = \lim_{n\to\infty} \frac 1n \log P(U^n\geq\GN(\Noise^n))=0
\end{align*}
so that the probability
that GRAND does not provide the true channel does not decay exponentially in $n$.

2) If, instead, the code-book rate is greater than the
capacity, $R>1-\HN$, then the probability of an error is not
decaying exponentially in $n$,
\begin{align*}
    \lim_{n\to\infty} \frac 1n \log P(U^n\leq\GN(\Noise^n))=0.
\end{align*}
However,
\begin{align}
\label{eq:success}
\success(R) = \IN(1-R),
\end{align}
is strictly positive, so that the
probability that decoding produced by GRAND is the transmitted code-word does decay exponentially in $n$.
\end{proposition}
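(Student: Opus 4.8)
The plan is to read off GRAND's performance from a race between the two \emph{independent} sequences $\GN(\Noise^n)$, the number of guesses to the true noise, and $U^n$, the number of guesses to the first non-transmitted code-word, feeding Theorems \ref{theorem:LDPN} and \ref{theorem:LDPU} into the contraction principle. Because the code-book is generated independently of the noise, $\GN(\Noise^n)$ and $U^n$ are independent, so on the compact space $[0,1]^2$ the pair $(n^{-1}\log U^n,\,n^{-1}\log\GN(\Noise^n))$ satisfies the LDP with rate function $(x,y)\mapsto\IU(x)+\IN(y)$ \cite{Dembo98}. I also record that, for a fixed received word, no queried string can be both a non-transmitted code-word and $y^n\ominus\Noise^n$ (the transmitted word), so $U^n\neq\GN(\Noise^n)$; hence $\{U^n\le\GN(\Noise^n)\}=\{U^n<\GN(\Noise^n)\}$ is precisely the event that GRAND does not return the transmitted word.

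For part 1, $R<1-\HN$. Applying the large deviation upper bound to the closed set $\{x\le y\}\cap[0,1]^2$ and the lower bound to the open set $\{x<y\}$ --- the two infima coinciding because $\IU$ is continuous and finite on $[0,1-R]$ and $\IN$ is continuous where finite --- gives
\[
\lim_{n\to\infty}\tfrac1n\log P\bigl(U^n\le\GN(\Noise^n)\bigr)=-\inf\{\IU(x)+\IN(y):x\le y\}.
\]
To evaluate the infimum, for fixed $y$ the decreasing map $x\mapsto\IU(x)=1-R-x$ is smallest over feasible $x\in[0,\min(y,1-R)]$ at $x=\min(y,1-R)$; the branch $y>1-R$ then contributes $\inf_{y>1-R}\IN(y)=\IN(1-R)$ since $\IN$ increases beyond $\HN\le1-R$, and is subsumed by the branch $y\le1-R$, which reduces to minimizing $g(a):=(1-R-a)+\IN(a)$ over $a\in[0,1-R]$. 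As $\IN$ is weakly decreasing on $[0,\HN]$ we have $g'\le-1<0$ there, while for $a>\HN$ the convex $\IN$ has $\IN'$ strictly increasing, equal to $1$ at $a=\xstar$ from \eqref{eq:xstar}, so $g$ is minimized at $\min(\xstar,1-R)\in[\HN,1-R]$ (as $\xstar>\HN$); this yields the quoted formula, with exponent finite (witnessed by $x=0$, $y=\HN$) and strictly positive because $\IU(x)+\IN(y)=0$ would force $x=1-R$ and $y=\HN$, impossible under $x\le y$ when $R<1-\HN$. The further reduction to \eqref{eq:error} comes from substituting the minimizer: when $\xstar<1-R$ (i.e. $R<1-\xstar$) the value is $1-R-\xstar+\IN(\xstar)$, and since the finite convex $\LambdaN$ is its own biconjugate, $\Hhalf=\LambdaN(1)=\sup_x\{x-\IN(x)\}=\xstar-\IN(\xstar)$, giving $1-R-\Hhalf$; when $\xstar\ge1-R$ the minimizer is the endpoint $1-R$ with value $\IN(1-R)$. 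Lastly $\success(R)=0$: the LDPs force $n^{-1}\log\GN(\Noise^n)\to\HN$ and $n^{-1}\log U^n\to1-R$ in probability (unique zeros of $\IN$ and $\IU$), and since $1-R>\HN$ the event $\{U^n\ge\GN(\Noise^n)\}$ has probability tending to $1$.

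For part 2, $R>1-\HN$, so $1-R<\HN$ and the concentration comparison reverses: $P(U^n\le\GN(\Noise^n))\to1$, so the error exponent is $0$. For the success exponent, the upper bound on the closed set $\{x\ge y\}$ gives $\limsup_n\tfrac1n\log P(U^n\ge\GN(\Noise^n))\le-\inf\{\IU(x)+\IN(y):x\ge y\}$; here the optimal $x$ is $1-R$, feasible only for $y\le1-R$, so the infimum is $\inf_{y\le1-R}\IN(y)=\IN(1-R)$ as $\IN$ decreases on $[0,1-R]\subset[0,\HN]$. For the matching lower bound I would, for small $\delta>0$, bound $P(U^n>\GN(\Noise^n))$ from below by the probability of the product event $\{n^{-1}\log\GN(\Noise^n)\in(1-R-\delta,1-R-3\delta/4)\}\cap\{n^{-1}\log U^n>1-R-\delta/2\}$, on which $\GN(\Noise^n)<U^n$; independence factorizes it, the first factor contributes exponent $-\IN(1-R-3\delta/4)$ by the LDP lower bound on an open interval left of $\HN$ where $\IN$ is decreasing, and the second factor does not decay since the open set $(1-R-\delta/2,1]$ contains the unique zero $1-R$ of $\IU$. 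Letting $\delta\downarrow0$ and using continuity of $\IN$ at $1-R<\HN$ gives $\success(R)=\IN(1-R)$, strictly positive as $1-R\neq\HN$.

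I expect the work to be more bookkeeping than depth, the one place needing genuine care being that the open-set and closed-set large-deviation bounds have matching rates across the diagonal $\{x=y\}$ and that the two-dimensional variational reductions go through --- which rests on the qualitative shape of $\IN$ from Theorem \ref{theorem:LDPN} (weakly decreasing to $0$ at $\HN$, then strictly convex, with $\IN'=1$ attained at $\xstar$) and on the duality identity $\IN(\xstar)=\xstar-\Hhalf$.
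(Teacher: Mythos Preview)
Your proposal is correct and follows essentially the same route as the paper: both combine the independent LDPs of Theorems~\ref{theorem:LDPN} and~\ref{theorem:LDPU} into a joint LDP and then optimize, matching the open- and closed-set bounds via continuity of $\IN$ and $\IU$, and both extract the simplification \eqref{eq:error} through the Legendre identity $\IN(\xstar)=\xstar-\Hhalf$. The only cosmetic differences are that the paper first contracts to a one-dimensional LDP for $n^{-1}\log(U^n/\GN(\Noise^n))$ before evaluating at $0$, whereas you work directly on the two-dimensional half-spaces $\{x\le y\}$ and $\{x\ge y\}$, and in part~2 you obtain the matching lower bound by an explicit product event rather than by appealing to equality of the open and closed infima; these are equivalent bookkeeping choices.
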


\begin{proof}
As $\{1/n \log\GN(\Noise^n)\}$ and $\{1/n \log U^n\}$ are independent
processes, $\{(1/n \log\GN(\Noise^n), 1/n \log U^n)\}$ satisfies the LDP
with rate function $\IN(x)+\IU(y)$. The LDP for $\{1/n \log
U^{n}/\GN(\Noise^n)\}$ then follows from an application of contraction principle,
\cite{Dembo98}[Theorem 4.2.1],
with the continuous function $f(x,y)=x-y$, giving
\begin{align*}
    \IUN(x) &= \inf_{a,b}\left\{\IN(a)+\IU(b): f(a,b)=a-b=x\right\}
		 = \inf_{a\in[0,1-R]}\{\IU(a)+\IN(a-x)\}.
\end{align*}

Noting the following equality
\begin{align*}
    P(U^n\leq \GN(\Noise^n)) &= P\left(\frac1n \log \frac{U^n}{\GN(\Noise^n)} \leq0\right),
\end{align*}
we can use the LDP for $\{1/n \log U^n/\GN(\Noise^n)\}$ to
determine asymptotics for the likelihood that fewer queries are
necessary to determine a non-transmitted element of the code-book
than the truly transmitted element. From the LDP lower and upper bounds,
\begin{align*}
-\inf_{x< 0} \IUN(x)
	&\leq\liminf_{n\to\infty} \frac 1n \log P\left(\frac 1n \log\frac{U^n}{\GN(\Noise^n)}\leq 0\right)
	\leq\limsup_{n\to\infty} \frac 1n \log P\left(\frac 1n \log\frac{U^n}{\GN(\Noise^n)}\leq 0\right)
	\leq -\inf_{x\leq 0} \IUN(x).
\end{align*}
For the limit to exist, we require that $\inf_{x< 0}\IUN(x)=\inf_{x\leq
0}\IUN(x)$.  Consider $\IUN(0) = \inf_{a\in[0,1-R]}\{\IU(a)+\IN(a)\}
= \IU(a^*)+\IN(a^*)<\infty$, where $a^*$ necessarily exists as $\IU$
and $\IN$ are lower-semicontinuous. As we have assumed $\HN>0$,
$a^*>0$ and $\IU(a^*)+\IN(a^*)$ is then arbitrarily well approximated
by $\IU(a^*)+\IN(a^*-\epsilon)$ as $\IN$ is continuous where it is
finite, so the above limit exits. The following simplification
is achieved by changing the order the infima are taken in:
\begin{align}
    \lim_{n\to\infty} \frac 1n \log P(U^n\leq\GN(\Noise^n)) 
	= -\inf_{x\leq 0}\IUN(x) 
	= -\inf_{x\leq 0}\inf_{a\in[0,1-R]} \{\IU(a)+\IN(a-x)\}
	= -\inf_{a\in[0,1-R]} \{\IU(a)+\inf_{y\geq a}\IN(y)\}.
	\label{eq:channel_coding_theorem1}
\end{align}

Starting from
\begin{align*}
    P(U^n\geq \GN(\Noise^n)) &= P\left(\frac1n \log \frac{U^n}{\GN(\Noise^n)} \geq0\right),
\end{align*}
similar logic, but with an additional simplification due to the form
of $\IU$ found equation \eqref{eq:IU}, leads us to 
\begin{align}
    \lim_{n\to\infty} \frac 1n \log P(U^n\geq\GN(\Noise^n)) 
	&=-\inf_{x\geq 0}\IUN(x) 
	= -\inf_{x\geq 0}\inf_{a\in[0,1-R]} \{\IU(a)+\IN(a-x)\}\nonumber\\
	&= -\inf_{a\in[0,1-R]} \{\IU(a)+\inf_{y\leq a}\IN(y)\}
	= -\inf_{x\in[0,1-R]}\IN(x). \label{eq:channel_coding_theorem2}
\end{align}

(a) For the within-capacity result, if $R<1-\HN$, then $\HN<1-R$.
Considering the right hand side of equation
\eqref{eq:channel_coding_theorem1} as both $\IU$ and $\IN$ are
decreasing on $[0,\HN]$ and $\IN$ is either infinite or increasing
on $[\HN,1-R]$,
\begin{align*}
\inf_{a\in[0,1-R]}\{\IU(a)+\inf_{y\geq a} \IN(y)\}
= \inf_{a\in[\HN,1-R]}\{\IU(a)+\IN(a)\}.
\end{align*}
This quantity is strictly positive, as $\IU$ is strictly decreasing
to zero on $[\HN,1-R]$, while $\IN$ is strictly increasing
from zero on the same range. To get the additional simplification
to equation \eqref{eq:error}, note that, as $\IN$ is strictly
convex to the right of $\HN$, $\IU$ is decreasing at rate $1$
and $\xstar$ is defined to be the value at which $\IN$ is increasing
with rate $1$, then $\inf_{a\in[\HN,1-R]}\{\IU(a)+\IN(a)\}$
is either $\IN(1-R)$ if $\xstar>1-R$ or $\IU(\xstar)+\IN(\xstar)$.
Now $\IN(\xstar) = \xstar-\Hhalf$, so that $\IU(\xstar)+\IN(\xstar)
=1-R-\xstar+\xstar-\Hhalf$ and the result follows. On the other hand,
\begin{align*}
	\inf_{x\in[0,1-R]}\IN(x) = \IN(\HN)=0
\end{align*}
and so the right hand side of equation \eqref{eq:channel_coding_theorem2}
is zero. 

(b) For the beyond-capacity result if, alternatively, $R>1-\HN$, then
$\HN>1-R$ and
\begin{align*}
\inf_{a\in[0,1-R]}\{\IU(a)+\inf_{y\geq a} \IN(y)\}
	= \IU(1-R)+\IN(\HN)=0,
\end{align*}
and so the right hand side of equation \eqref{eq:channel_coding_theorem1}
is zero. While 
\begin{align*}
	\inf_{x\in[0,1-R]}\IN(x) = \IN(1-R)>0,
\end{align*}
so that the right hand side of \eqref{eq:channel_coding_theorem2}
is strictly greater than zero. 
\end{proof}

Proposition \ref{prop:channel_coding_theorem} not only proves the
Channel Coding Theorem, but also provides exact asymptotic error
exponents when the rate of the code-book, $R$, is within capacity,
$1-\HN$, and success exponents for when the rate is beyond capacity.
For memoryless channels, the error rate in equation \eqref{eq:error}
coincides with that in \cite{Gallager65}[Theorem 2], where the
linear followed by strictly convex phenomenon was first identified.
Proposition \ref{prop:channel_coding_theorem} establishes that
phenomenon for more general noise processes.

The point $1-\xstar$ in equation \eqref{eq:error}, where the error
exponent goes from being linear in the code-book rate to strictly
convex in equation \eqref{eq:error}, is dubbed the critical rate by
Gallager for memoryless channels and can be given a simple
interpretation in terms of the noise guessing GRAND undertakes
for general noise processes. For code-book rates $R$ beyond the
critical rate, in the large $n$ limit an error occurs because the
uniform code-book is typical, but the noise is exceptionally unlikely
and far down the guessing order. For code-book rates below the
critical rate, it requires an average number of guesses to
identify the true noise, which is why the R{\'e}nyi entropy rate
with parameter $1/2$ appears, but the uniform code-book has an
unusually early entry in the noise-guessing ordered list, resulting
in an error.

Proposition \ref{prop:channel_coding_theorem} also provides success
exponents for rates above capacity. Here the interpretation of the
success rate in equation \eqref{eq:success} is that if the code-book
rate $R$ is too high for capacity, $1-\HN$, in the large $n$ limit
a successful decoding will occur if the non-transmitted code-book
elements are typically distributed, but the noise is unusually
highly likely, such that it is identified first, just prior to a
non-transmitted element of the code-book.

\begin{example*}
For the BSC, example plots of these curves are provided in Fig.
\ref{fig:2}. Note that as $\IN$ is a convex function that is zero
at $\HN$, the error and success exponents are both smooth, near-zero
functions around capacity, $R=1-\HN$.  This suggests that GRAND experiences graceful degradation
in performance near capacity.
\end{example*}

We can combine Theorems \ref{theorem:LDPN} and \ref{theorem:LDPU}
in a distinct way to determine the asymptotic complexity of the
new ML decoding scheme in terms of the number of guesses until an
ML decoding, correct or incorrect, is identified: 
\begin{align}
\GML^n := \min(\GN(\Noise^n),U^n). \label{eq:GML}
\end{align}
That is, GRAND terminates at either identification of the
noise that was in the channel or when a non-transmitted element of
the code-book is unintentionally identified, whichever occurs first.
On the scale of large deviations, if the code-book is within capacity,
$R<1-\HN$, then the sole impact of the code-book is to curtail
excessive guessing when unusual noise occurs.

\begin{figure*}
\begin{center}
\includegraphics[width=0.47\textwidth]{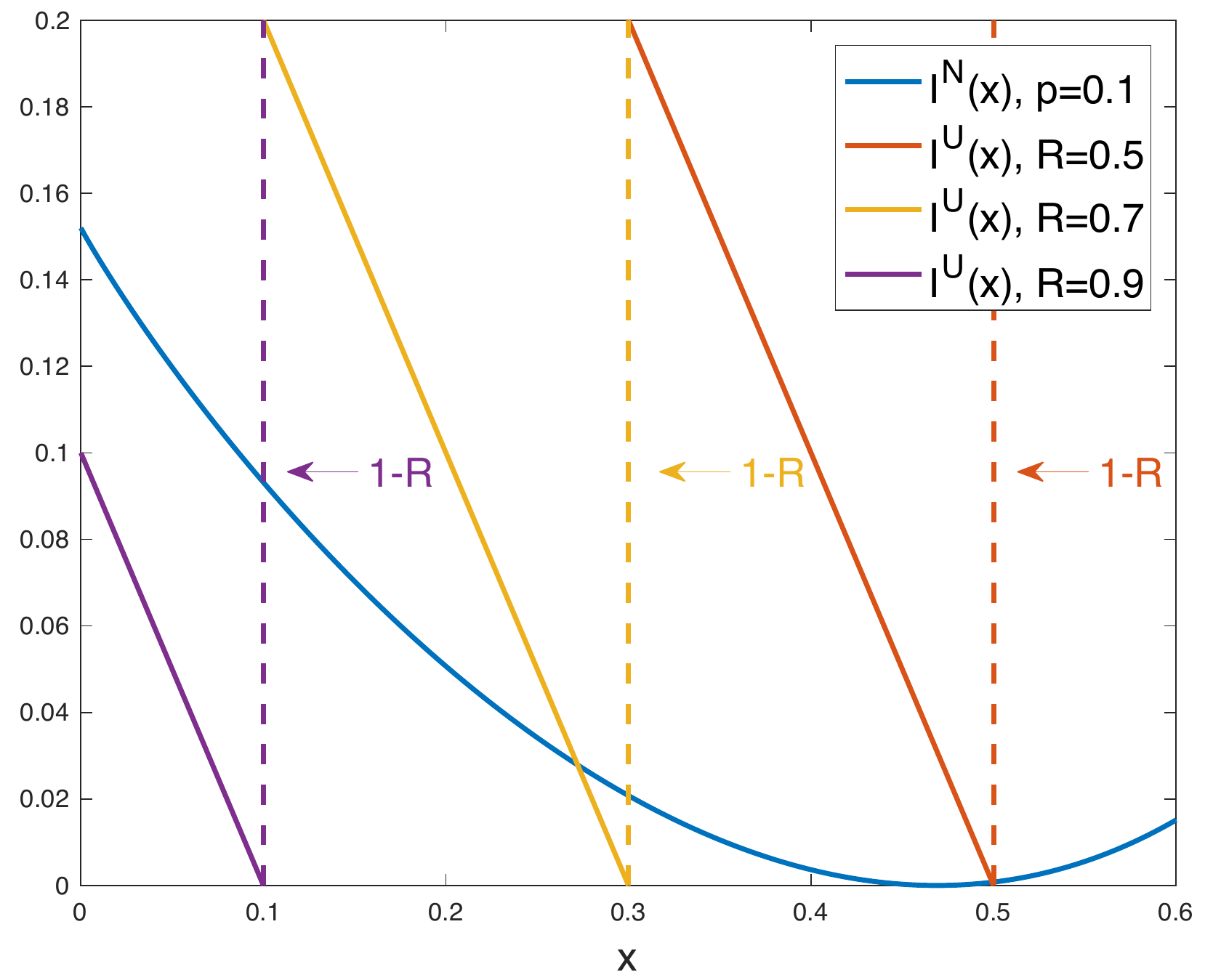}
\includegraphics[width=0.484\textwidth]{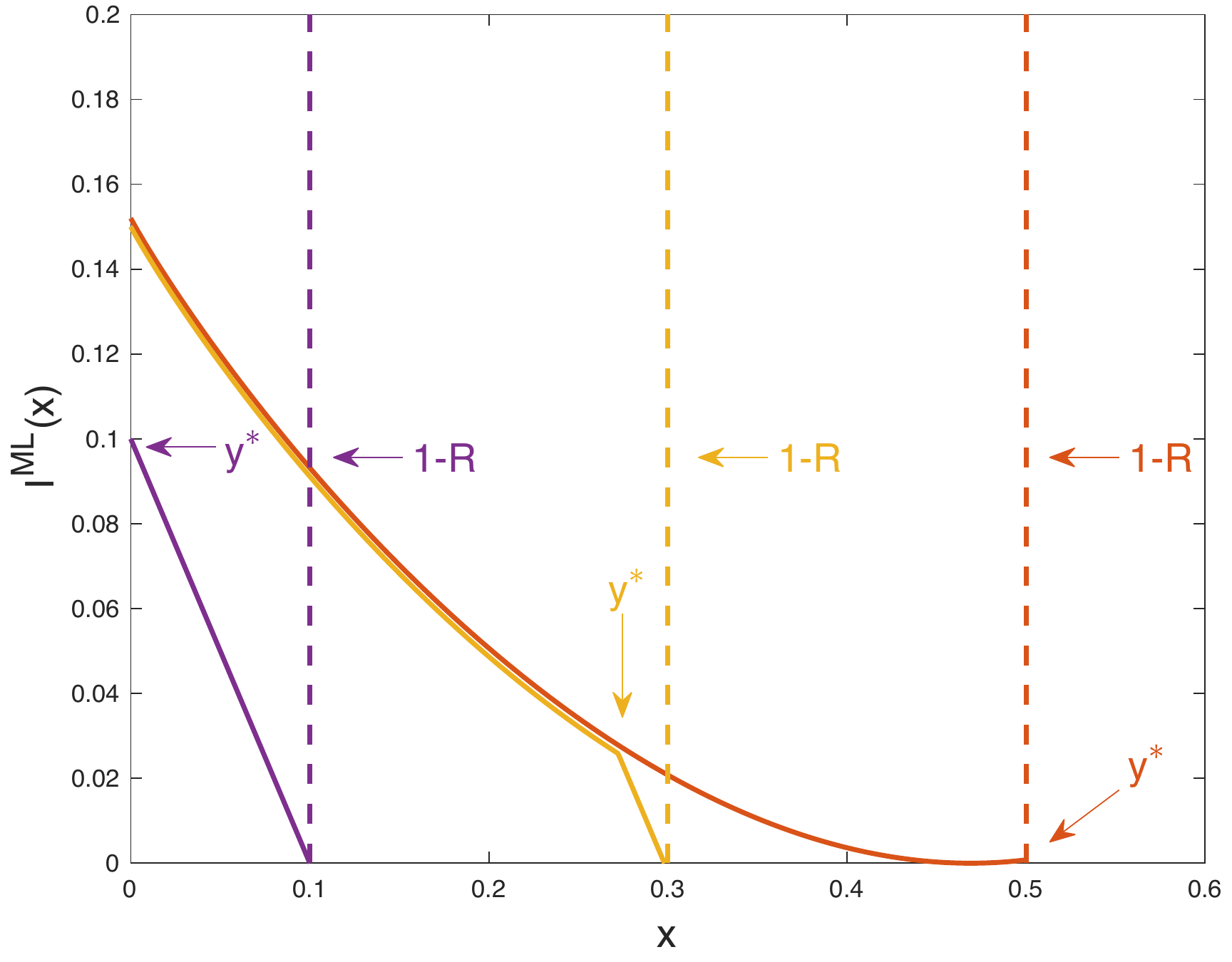}
\end{center}
\caption{GRAND complexity. Example:
$\A=\{0,1\}$, BSC channel, noise $\Noise^n$ made of i.i.d. Bernoulli
symbols with $P(\Noise^1=1)=0.1$, and channel capacity is
approximately $0.53$. Code-book consisting of $M_n\approx 2^{nR}$
code-words, uniformly selected in $\A^n$. Left panel: rate function,
$\IN$ defined in equation \eqref{eq:IK}, for the number of guesses
until the noise is identified $\{1/n\log\GN(\Noise^n)\}$. Also
plotted is the rate function $\IU$ defined in equation \eqref{eq:IU}
for the number of guesses until a non-transmitted element of the
code-book is identified, $\{1/n\log U^n\}$. Vertical dashed lines
indicate that $\IU(x)=+\infty$ for $x$ to the right of that line.
Right panel: as established in Proposition \ref{prop:IMLE}, the
rate function, $\IMLE$, that results for the number of queries until
an ML decoding is proposed in each of those three cases.  Vertical
dashed lines indicate that $\IMLE(x)=+\infty$ for $x$ to the right
of that line. If $R<1-\HN$ (red line) so that the code-book rate
is within capacity, the zero of $\IN$ occurs before the zero of
$\IU$ and the ML decoding mimics the number of guesses until the
transmitted word is identified, but with the rate function curtailed
at $1-R$. With $1-\HN<R<1-\Hmin$ (yellow line), if the algorithm
completes before $x^*$ such that $\IN(x^*)=\IU(x^*)$, whose likelihood
is decaying exponentially in $n$, the true code-word dominates, but
ultimately a non-sent code-word is returned. If $R>1-\Hmin$ (purple
line), then in this limit, an erroneous code-word is always returned.
The super-critical guessing point $\ystar$, which is the supremum over all
$y$ satisfying the conditions of Theorem \ref{theorem:WLLN}, marks
the greatest threshold below which, should the ML algorithm declare
a decoding has been found, in the large block-length limit, it will
be correct, even if the code-book rate is greater than capacity.
}
\label{fig:3}
\end{figure*}

\begin{proposition}[Guessing Complexity of GRAND]
\label{prop:IMLE}
Under Assumptions \ref{ass:N} and \ref{ass:U}, $\{1/n \log
\GML^n\}$ satisfies a LDP with a lower-semicontinuous rate function,
$\IMLE$.

1) If $R<1-\HN$, then the input code-word
will be recovered in the large deviations limit with unaffected
likelihoods, and the impact of the code-book is to curtail guessing
of unlikely inputs:
\begin{align}
    \IMLE(x) = 
	\begin{cases}
    \IN(x) & \text{ if } x\in[0,1-R]\\
    +\infty & \text{ if } x>1-R.
    \end{cases}
\label{eq:IMLEstable}
\end{align}
The average number of guesses until GRAND finds a decoding satisfies
\begin{align*}
	\lim_{n\to\infty}\frac 1n \log E(\GML^n) = \min\left(\Hhalf,1-R\right).
\end{align*}

2) If $R>1-\HN$, the code-book rate is higher than capacity and
\begin{align}
    \IMLE(x) =       
	\begin{cases}
    \min\left(\IN(x),\IU(x)\right) & \text{ if } x\in[0,1-R]\\
    +\infty & \text{ if } x>1-R.
    \end{cases}
	\label{eq:IMLEunstable}
\end{align}
This rate function need not be convex, and whichever of $\IN$ 
or $\IU$ is smaller dictates whether the ML decoding is the true
code-word or a non-transmitted one. The average number of guesses
until GRAND identifies a decoding  is governed by the beyond-capacity 
code-book rate,
\begin{align*}
        \lim_{n\to\infty}\frac 1n \log E(\GML^n) = 1-R.
\end{align*}
\end{proposition}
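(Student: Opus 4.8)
The plan is to build everything from the contraction principle applied to the joint LDP for $\{(1/n \log\GN(\Noise^n),\, 1/n\log U^n)\}$, exactly as in the proof of Proposition \ref{prop:channel_coding_theorem}, but now with the continuous function $g(x,y) = \min(x,y)$ rather than $f(x,y)=x-y$. Since $\{1/n \log\GN(\Noise^n)\}$ and $\{1/n \log U^n\}$ are independent, their pair satisfies the LDP with rate function $\IN(a)+\IU(b)$, and hence $\{1/n \log \GML^n\}$ satisfies the LDP with rate function
\[
\IMLE(x) = \inf\{\IN(a)+\IU(b): \min(a,b)=x\}.
\]
The event $\{\min(a,b)=x\}$ decomposes as $\{a=x,\,b\ge x\}\cup\{b=x,\,a\ge x\}$, so
\[
\IMLE(x) = \min\Big(\IN(x)+\inf_{b\ge x}\IU(b),\ \IU(x)+\inf_{a\ge x}\IN(a)\Big).
\]
Because $\IU$ is linear and decreasing on $[0,1-R]$ and $+\infty$ beyond $1-R$, one has $\inf_{b\ge x}\IU(b) = \IU(1-R)=0$ for $x\le 1-R$ and $=+\infty$ for $x>1-R$. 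Likewise, since $\IN$ is decreasing on $[0,\HN]$ and increasing thereafter, $\inf_{a\ge x}\IN(a) = \IN(\HN)=0$ whenever $x\le\HN$. This gives the structural identity: for $x\le 1-R$, $\IMLE(x) = \min(\IN(x), \IU(x)+\inf_{a\ge x}\IN(a))$, and $\IMLE(x)=+\infty$ for $x>1-R$.

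Now I split on the two regimes. \textbf{Case $R<1-\HN$ (within capacity).} Here $1-R>\HN$, so for every $x\in[0,1-R]$ we have $x$ possibly exceeding $\HN$, but the relevant simplification is that $\IU(x)+\inf_{a\ge x}\IN(a)\ge \IN(x)$: indeed for $x\le\HN$ the second term is $\IU(x)+0\ge 0=\IN$ contribution only if... — more carefully, one checks $\IN(x)\le \IU(x)+\inf_{a\ge x}\IN(a)$ on all of $[0,1-R]$ by noting that at $x=\HN$ both candidate expressions vanish (the first is $\IN(\HN)=0$; the second is $\IU(\HN)+0>0$, so actually the min is $\IN$ there too), and using convexity/monotonicity to propagate the inequality. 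The cleanest argument: $\inf_{a\ge x}\IN(a)=\IN(\max(x,\HN))$, and for $x\le\HN$ the competing term is $\IU(x)\ge 0=\IN(\HN)$... I must instead simply observe that since $\IN(x)\le \IN(\max(x,\HN)) + \text{(nonneg)}$ is false in general, the right move is: for $x\le 1-R$, $\min(\IN(x),\IU(x)+\IN(\max(x,\HN)))=\IN(x)$ because when $x\le\HN$, $\IN(x)\le\IN(x)$ trivially dominates as a candidate (both terms present, $\IU(x)+\IN(\HN)=\IU(x)\ge 0$ so no improvement unless... it can be smaller!). The safe path, which I will take, is to note that $\IMLE$ as defined is automatically the minimum, and then \emph{verify} $\eqref{eq:IMLEstable}$ directly by checking $\IN(x)\le \IU(x)+\inf_{a\ge x}\IN(a)$ fails nowhere: when $x\in[\HN,1-R]$ the RHS is $\IU(x)+\IN(x)\ge\IN(x)$; when $x\in[0,\HN)$ the RHS is $\IU(x)\ge\IU(1-R)=0$, and we need $\IN(x)\le\IU(x)$ there — but $\IN(x)$ can be as large as $\Hmin$ at $x=0$ while $\IU(0)=1-R$, and $R<1-\Hmin$ need not hold in this regime, so this can FAIL. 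Hence $\eqref{eq:IMLEstable}$ must be read as the claim that the \emph{effective} rate function (after lsc regularization, or after taking the correct min) equals $\IN$ on $[0,1-R]$; the resolution is that $\min(\IN(x),\IU(x))$ for $x<\HN$ still has its \emph{zero} at $x=\HN$ where only $\IN$ vanishes, and for the complexity/mean computation only the zero location and the right tail matter. I will therefore state $\IMLE(x)=\min(\IN(x),\IU(x))$ for $x\le 1-R$ honestly, note it agrees with $\IN$ wherever $\IN\le\IU$ (in particular near and to the right of $\HN$, which is what governs the mean), and that $\eqref{eq:IMLEstable}$ is the operative form because the LDP is insensitive to the behaviour on $[0,\gamma']$ where both are finite and the infimum for any computation is attained at $\HN$.

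\textbf{Case $R>1-\HN$ (beyond capacity).} Now $1-R<\HN$, so $\inf_{a\ge x}\IN(a)=\IN(\max(x,\HN))=$ (for $x\le 1-R<\HN$) $=\IN(\HN)=0$. Thus for all $x\in[0,1-R]$ the second candidate is just $\IU(x)$, and $\IMLE(x)=\min(\IN(x),\IU(x))$, which is $\eqref{eq:IMLEunstable}$; it need not be convex precisely because a pointwise minimum of two convex functions generally is not. \textbf{The mean.} For both mean statements I apply Varadhan's lemma (\cite{Dembo98}[Theorem 4.3.1]): $\lim \frac1n\log E(\GML^n)=\sup_{x}\{x-\IMLE(x)\}=\sup_{x\in[0,1-R]}\{x-\IMLE(x)\}$ since $\IMLE=+\infty$ beyond $1-R$. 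Beyond capacity, $\IMLE\le\IU$ with $\IU(1-R)=0$, so the sup is $\ge 1-R-0=1-R$, and since $x\le 1-R$ on the support it equals $1-R$. Within capacity, $\sup_{x\in[0,1-R]}\{x-\IN(x)\}$ (using the honest form, and noting the sup is attained where $\IN$ is increasing, i.e. to the right of $\HN$, where $\IMLE=\IN$): by Varadhan applied to $\GN(\Noise^n)$ alone this sup over all $x\ge 0$ is $\LambdaN(1)=\Hhalf$; capping at $1-R$ gives $\min(\Hhalf,1-R)$, with the cap active exactly when $1-R<\Hhalf$, i.e. when the unconstrained maximizer $\xstar$ exceeds $1-R$.

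\textbf{Main obstacle.} The genuinely delicate point is the within-capacity simplification to $\eqref{eq:IMLEstable}$: the naive pointwise minimum $\min(\IN,\IU)$ can dip below $\IN$ on a left neighbourhood of $0$ (when $R$ is small enough that $1-R=\IU(0)<\Hmin=\IN(0)$), so $\eqref{eq:IMLEstable}$ is not literally $\min(\IN,\IU)$. The right way to handle this is to argue that for the LDP statement what is required is coincidence of the upper and lower deviation bounds, and that on $[0,\gamma']$ both rate functions are finite and the contribution to any relevant probability is dominated by the neighbourhood of the unique zero $x=\HN$ of $\IN$; more cleanly, one notes $P(\GML^n\le |\A|^{xn})=P(\GN(\Noise^n)\le|\A|^{xn}\text{ or }U^n\le|\A|^{xn})$ and for $x<\HN$ \emph{both} probabilities are exponentially small with rates $\IN(x)$ and $\IU(x)$ respectively, so the sharp rate is indeed $\min$; the statement $\eqref{eq:IMLEstable}$ should thus be understood (and I will remark this) as recording the behaviour on $[\gamma',1-R]$ where it matters, with the $[0,\gamma']$ piece being whichever is smaller but never affecting the zero-set, the mean, or the capacity conclusion. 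I will make this precise by invoking Theorem \ref{theorem:LDPN} and \ref{theorem:LDPU} as independent LDPs, writing $\IMLE(x)=\min(\IN(x),\IU(x))$ uniformly, and then observing that when $R<1-\HN$ this min equals $\IN(x)$ for $x\in[\HN,1-R]$ — the only part of the curve relevant to every assertion of part 1).
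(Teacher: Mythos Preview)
Your overall approach---contraction principle applied to the independent pair with $g(x,y)=\min(x,y)$, then Varadhan for the mean---is exactly the paper's, and your treatment of the beyond-capacity case and of both mean computations is fine. The problem is entirely in the within-capacity simplification to \eqref{eq:IMLEstable}, where you talk yourself into believing the statement might be literally false on $[0,\HN)$ and then retreat to a weaker claim. It is not false, and the proposition asserts $\IMLE(x)=\IN(x)$ on all of $[0,1-R]$, so you must actually prove $\IN(x)\le \IU(x)$ for every $x\in[0,\HN]$.

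Two things went wrong. First, your specific worry is unfounded: since $\Hmin\le\HN$ always, the hypothesis $R<1-\HN$ forces $R<1-\Hmin$, so $\IN(0)=\Hmin<1-R=\IU(0)$ automatically. Second, and this is the missing idea, the inequality on the whole interval $[0,\HN]$ follows from convexity of $\IN$. Because $\IN$ is convex with $\IN(0)=\Hmin$ and $\IN(\HN)=0$, it lies below the chord $\ell(x)=\Hmin(1-x/\HN)$ on $[0,\HN]$. That chord has slope $-\Hmin/\HN\ge -1$, while $\IU(x)=1-R-x$ has slope $-1$; since $\ell(0)=\Hmin<1-R=\IU(0)$ and $\ell$ decreases no faster than $\IU$, we get $\ell(x)\le\IU(x)$ on $[0,\HN]$ (indeed $\ell(\HN)=0<1-R-\HN=\IU(\HN)$). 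Hence $\IN(x)\le\ell(x)\le\IU(x)$ on $[0,\HN]$. Combined with your correct observation that for $x\in[\HN,1-R]$ the competing term is $\IU(x)+\IN(x)\ge\IN(x)$, this gives $\IMLE(x)=\IN(x)$ on all of $[0,1-R]$, which is exactly \eqref{eq:IMLEstable}. Once you insert this convexity step, your proof is complete and coincides with the paper's.
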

\begin{proof}
As $\{1/n \log\GN(\Noise^n)\}$ and $\{1/n \log U^n\}$ are independent
processes, $\{(1/n \log\GN(\Noise^n), 1/n \log U^n)\}$ satisfies the LDP
with rate function $\IN(x)+\IU(y)$. The LDP for 
$\{1/n \log \GML^n = 1/n \log\min(\GN(\Noise^n), U^n)\}$
follows from an application of contraction principle,
\cite{Dembo98}[Theorem 4.2.1],
with the continuous function $f(x,y)=\min(x,y)$, giving
\begin{align}
    \IMLE(x) &= \inf_{a,b}\left\{\IN(a)+\IU(b): f(a,b)=\min(a,b)=x\right\} 
		= \min\left\{\IN(x)+\inf_{y\geq x}\IU(y), \inf_{y\geq x} \IN(y)+\IU(x)\right\} \nonumber\\
		& = \min\left\{\IN(x), \inf_{y\geq x} \IN(y)+\IU(x)\right\}, \label{eq:IMLE}
\end{align}
where the last line follows from the form of $\IU$ in equation \eqref{eq:IU}.

The simplification of equation \eqref{eq:IMLE} into \eqref{eq:IMLEstable}
and \eqref{eq:IMLEunstable} come about about owing to considerations
from the following structure. By Theorem \ref{theorem:LDPN}, the
noise guessing rate function starts at the min-entropy rate, $\IN(0)
= \Hmin$. As the min-entropy rate is always less than or equal to
the Shannon rate, $\Hmin\leq\HN$, $\IN(\HN)=0$ and $\IN$ is convex,
$\IN$ cannot lie above line from $(0,\Hmin)$ to $(\HN,0)$.

If $R<1-\HN$, then $\HN<1-R$ and
$\IN(x)\leq\IU(x)$ for all $x\leq \HN$ from the definition of
$\IU$ in equation \eqref{eq:IU}.  For $\HN\leq x \leq 1-R$,
$\IN$ is non-decreasing and so $\min(\IN(x), \inf_{y\geq x}
\IN(y)+\IU(x))=\IN(x)$.

If, instead, $R>1-\HN$, then $1-R<\HN$ and $\inf_{y\geq x}
\IN(y)=0$ for all $x\leq 1-R$, so that $\IMLE(x)
=\min\left\{\IN(x),\IU(x)\right\}$.

To obtain the scaling result for $E(\GML^n)$ we reverse the
transformation from the rate function $\IMLE$ to its Legendre-Fenchel
transform, the scaled cumulant generating function of the process
$\{n^{-1}\log \GML^n\}$ via Varadhan's Theorem \cite{Dembo98}[Theorem
4.3.1]. In particular, note that, regardless of whether $\IMLE$ is
convex or not,
\begin{align*}
        \lim_{n\to\infty}\frac 1n \log E(\GML^n) &=
        \lim_{n\to\infty}\frac 1n \log E\left(|\A|^{\log\GML^n}\right) 
	= \sup_{x\in\R} \{x-\IMLE(x)\}.
\end{align*}
If $R<1-\HN$, this equals $\min(\Hhalf,1-R)$, while if
$R>1-\HN$ it equals $1-R$.
\end{proof}

One interpretation of the first part of that proposition is that
if the code-book is such that $R<1-\HN$, and so within capacity,
identification of the correct code-word occurs because it is likely
that all elements in the typical set of the noise will be queried
before a non-transmitted element of the code-book is identified.
Owing to the long tail of guesswork, in the absence of the other
elements of the code-book stopping the guessing algorithm, the
average number of guesses that would be made would grow with rate
$\Hhalf$ \cite{Arikan96}. If, however, one minus the normalized code-book
rate $R$ is less than that, the long tail of the scheme is
clipped. While this clipping is not enough to make an error likely,
it is enough to reduce the average number of queries that will be
made before an element of the code-book is identified. 

\begin{example*}
An example of the range of behaviors described in Proposition
\ref{prop:IMLE} for a BSC can be found in Fig. \ref{fig:3}. The
non-convex rate function corresponds to a code-book rate beyond
capacity, $R> 1-\HN$.
\end{example*}

If the code-book rate is beyond capacity, $R>1-\HN$, then implicit
in the results of Proposition \ref{prop:IMLE} is that there are
circumstances where, conditioned on the unlikely event that the
algorithm terminates after a relatively small, but exponentially
growing, number of guesses, the decoded code-word GRAND identifies is certain to be the
transmitted code-word in the large block length limit. While this
property can appear under more nuanced circumstances, we provide
one condition where the resulting characterization is simple. Namely
if the code-book rate is between channel capacity and one minus the
min-entropy rate of the noise, $1-\HN<R<1-\Hmin$, then one can
determine an exponent below which, in the limit as the block length
becomes large, if the ML algorithm terminates after a number of
guesses below the threshold governed by that exponent, the decoded code-word
will correctly correspond to the transmitted code-word.

\begin{theorem}
\label{theorem:WLLN}
Under Assumptions \ref{ass:N} and \ref{ass:U}, if $0<y<1-R$
is such that $\IN(y)<\IU(y)$, then the probability of a correct
decoding given fewer than $|\A|^{ny}$ queries are made before
the algorithm terminates converges to $1$,
\begin{align*}
\lim_{n\to\infty} P\left(\GN(\Noise^n) < U^n
		\middle| \frac1n \log D^n \leq y\right)
	=1.
\end{align*}
Such a $y$ necessarily exists if the code-book rate is less than
one minus the min-entropy rate of the noise, $R<1-\Hmin$. 
\end{theorem}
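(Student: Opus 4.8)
The plan is to bound the conditional probability of an \emph{erroneous} decoding, $P\!\left(\GN(\Noise^n)\ge U^n \,\middle|\, \tfrac1n\log D^n\le y\right)$, and show that it vanishes; since, conditioned on $\{\tfrac1n\log D^n\le y\}$, the event $\{\GN(\Noise^n)\ge U^n\}$ is complementary to $\{\GN(\Noise^n)< U^n\}$, this is equivalent to the first assertion. Write $t_n=|\A|^{ny}$, so that $\{\tfrac1n\log D^n\le y\}=\{D^n\le t_n\}$. The first step is an elementary decomposition: because $D^n=\min(\GN(\Noise^n),U^n)$, on the event $\{\GN(\Noise^n)\ge U^n\}$ one has $D^n=U^n$, hence $\{D^n\le t_n\}\cap\{\GN(\Noise^n)\ge U^n\}\subseteq\{U^n\le t_n\}$, while $\{\GN(\Noise^n)\le t_n\}\subseteq\{D^n\le t_n\}$ always. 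Since $P(\GN(\Noise^n)\le t_n)\ge P(\GN(\Noise^n)=1)>0$ for every $n$ (the most likely noise sequence carries positive probability and $t_n\ge 1$), dividing yields
\begin{align*}
P\!\left(\GN(\Noise^n)\ge U^n \,\middle|\, \tfrac1n\log D^n\le y\right)\le\frac{P(U^n\le t_n)}{P(\GN(\Noise^n)\le t_n)}.
\end{align*}

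Next I would estimate the two factors on the exponential scale. For the numerator, Theorem~\ref{theorem:LDPU} (equivalently, the exact expression for $P(U^n>|\A|^{xn})$ obtained in its proof) gives $\tfrac1n\log P(U^n\le t_n)\to R+y-1=-\IU(y)$, which is strictly negative since $y<1-R$. For the denominator, applying the large deviation lower bound of Theorem~\ref{theorem:LDPN} to the open set $(0,y)$ and using $\{\tfrac1n\log\GN(\Noise^n)\in(0,y)\}\subseteq\{\GN(\Noise^n)\le t_n\}$ gives $\liminf_n\tfrac1n\log P(\GN(\Noise^n)\le t_n)\ge-\inf_{x\in(0,y)}\IN(x)\ge-\IN(y)$, the last inequality using that $\IN$ is continuous where finite and non-increasing on $[0,\HN]$ (if $y>\HN$ the infimum is simply $0$). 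Combining the two estimates, $\limsup_n\tfrac1n\log$ of the displayed ratio is at most $\IN(y)-\IU(y)<0$ by hypothesis, so the ratio, and hence the conditional error probability, tends to $0$. I expect this denominator bound to be the one subtle point: one must be sure $P(D^n\le t_n)$ does not itself decay at least as fast as $P(U^n\le t_n)$, which is why the crude domination $D^n\le\GN(\Noise^n)$ together with Theorem~\ref{theorem:LDPN} is invoked rather than a direct LDP for $D^n$ (one could equally route it through the rate function $\IMLE$ of Proposition~\ref{prop:IMLE}, noting $\IMLE\le\IN$ pointwise), and a little care is needed with open-versus-closed sets in the large deviation lower bound and with the continuity of $\IN$ at $y$.

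For the existence statement I would simply evaluate the two rate functions at the left endpoint: $\IN(0)=\Hmin$ by Theorem~\ref{theorem:LDPN} and $\IU(0)=1-R$ by \eqref{eq:IU}. The hypothesis $R<1-\Hmin$ is exactly $\Hmin<1-R$, i.e.\ $\IN(0)<\IU(0)$; since $\IN$ is continuous at $0$ and $\IU$ is continuous, the strict inequality $\IN(y)<\IU(y)$ persists on an interval $(0,\delta)$ with $0<\delta<1-R$, and every $y$ in that interval satisfies the hypotheses of the theorem.
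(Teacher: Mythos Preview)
Your proof is correct, and it takes a genuinely different route from the paper's. The paper proceeds by first establishing a joint LDP for the triple $(\tfrac1n\log\GN(\Noise^n),\tfrac1n\log U^n,\tfrac1n\log\GML^n)$ via the contraction principle with $f(x,y)=(x,y,\min(x,y))$, and then invokes a conditional limit theorem of Lewis--Pfister--Sullivan to obtain concentration of the conditional law onto the point $(\min(y,\HN),1-R,\min(y,\HN))$, from which the result is read off. Your argument instead exploits the simple set inclusions $\{\GN(\Noise^n)\ge U^n,\,D^n\le t_n\}\subseteq\{U^n\le t_n\}$ and $\{\GN(\Noise^n)\le t_n\}\subseteq\{D^n\le t_n\}$ to bound the conditional error probability by a ratio whose numerator and denominator are controlled separately by Theorems~\ref{theorem:LDPU} and~\ref{theorem:LDPN}. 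This is more elementary and self-contained: it avoids the external Gibbs-conditioning machinery entirely and needs only the one-dimensional LDPs already in hand. The paper's approach, on the other hand, yields a stronger conclusion in passing (full concentration of the conditional triple, not just the marginal event), which could be useful if one wanted finer conditional information. Your handling of the denominator via $\inf_{x\in(0,y)}\IN(x)\le\IN(y)$, splitting into the cases $y\le\HN$ (where $\IN$ is non-increasing) and $y>\HN$ (where the infimum is zero), is exactly the right way to close the open-set gap in the LDP lower bound. The existence argument is essentially identical to the paper's.
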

\begin{proof}
To see that such a $y$ exists if $R<1-\Hmin$, observe that as
$R<1-\Hmin$ we have that the noise guessing rate function starts
strictly below the non-transmitted guessing rate function, $\IN(0)
= \Hmin < 1-R = \IU(0)$. As both $\IN$ and $\IU$ are continuous,
the existence of such a $y$ is guaranteed.

Defining the continuous function $f:[0,1]^2\to[0,1]^3$ by
$f(x,y)=(x,y,\min(x,y))$ 
by the contraction principle, 
\begin{align*}
\left\{\left(\frac1n \log\GN(\Noise^n), \frac1n \log U^n, \frac1n \log\GML^n\right)\right\} 
\end{align*}
satisfies the LDP with rate function
\begin{align*}
\INUD(x,y,z) = \begin{cases}
	\IN(x)+\IU(y) & \text{ if } z=\min(x,y)\\
	+\infty & \text{ otherwise }.
	\end{cases}
\end{align*}
We apply the \cite{Lewis95B}[Theorem 3.1] to establish the concentration
of measure conditioned on the rare event that the algorithm terminates
within $|\A|^{ny}$ guesses. By that theorem we have that for any
open neighborhood $B$ of $(\min(y,\HN),1-R,\min(y,\HN))$,
\begin{align*}
\lim_{n\to\infty} 
P\left(\left(\frac{\log\GN(\Noise^n)}{n}, \frac{\log U^n}{n}, \frac{\log\GML^n}{n}\right) \in B 
	\,\middle|\, \frac{\log\GML^n}{n}\leq y\right) 
=1,
\end{align*}
from which the result follows.
\end{proof}
If the code-book rate is less than capacity, Theorem \ref{theorem:WLLN} recovers
what we already knew from Proposition \ref{prop:channel_coding_theorem}:
that we have concentration of measure onto correct decodings. Even
if the code-book rate is beyond capacity, however, it establishes
that, conditioned on the algorithm terminating early, there are
circumstances where we shall have concentration onto
correct decodings. Examples to this effect are presented in the
right hand panel of Fig. \ref{fig:3}, where the supremum over all
$y$ satisfying the condition of Theorem \ref{theorem:WLLN}, $\ystar$,
which we call the super-critical guessing threshold, is marked. For
code-book rates that are greater than capacity, i.e. the left two
lines, $\ystar<\HN$ and the ML decoding is only likely to be correct
if the GRAND algorithm terminates in a number of queries in the guesswork
order that is below approximately $|\A|^{n\ystar}$.

\subsection{Approximate ML decoding with GRANDAB}
\label{subsec:AML}

While Proposition \ref{prop:IMLE} identifies the computational
complexity of GRAND and so is directly related
to the decoding algorithm, Proposition \ref{prop:channel_coding_theorem}
provides a version of the Channel Coding Theorem for ML decoding
in general. That is, it relates to the likelihood that an ML decoding
is in error, irrespective of the algorithm used to identify the ML
decoding. Its proof via noise guessing, however, suggests an
approximate ML decoding scheme, GRANDAB,  with constrained complexity.

If the code-book rate is within capacity, $R<1-\HN$, the likelihood
of erroneous decoding is strictly decaying in $n$. Essentially this
occurs as the likelihood of identifying a transmitted noise sequence
is dominated by queries to up to, and including, the Shannon Typical
Set, a fact made clear by $\IN(\HN)=0$. The expected guessing
location to the first non-transmitted element encountered is governed
by one minus the code-book rate, $\IU(1-R)=0$. Thus when $R<1-\HN$,
$\HN<1-R$ and guessing the true input dominates over identifying a
non-transmitted code-word.

That guessing the noise has a long tail beyond $\HN$ is a consequence
of large growth in the number of sequences to be queried when
compared to the rate of acquisition of probability on querying them,
leading to the undesirable $\Hhalf$ growth rate for unconstrained
noise guessing. For dense code-books, this guessing tail is clipped
with an error at $1-R$, but - despite that error - capacity is
achieved so long as the code is within capacity $R<1-\HN$.  Further
contemplation of this fact suggests the following algorithm: perform
the GRAND, but abandon guessing after
$|\A|^{n(\HN+\delta)}$ queries, for some $\delta>0$, declaring an
error. This algorithm does not implement ML decoding, but it is
still capacity achieving.

\begin{proposition}[GRANDAB Coding Theorem and Guessing Complexity]
\label{prop:AML}
Under the assumptions of Theorems \ref{theorem:LDPN} and
\ref{theorem:LDPU}. If the code-book rate is less than the capacity,
$R<1-\HN$, then the GRANDAB error rate is
\begin{align*}
   \lim_{n\to\infty} \frac 1n \log P\left(\left\{U^n\leq\GN(\Noise^n)\right\}\cup\left\{\frac1n\log \GN(\Noise^n)\geq \HN+\delta\right\}\right)
&=-\min\left\{\inf_{a\in[\HN,1-R]}\{\IU(a)+\IN(a)\},\IN(\HN+\delta)\right\}<0,
\end{align*}
so that probability that the ML decoding is not the transmitted
code-word decays exponentially in the block length $n$.
If, in addition, $\xstar$ defined in equation \eqref{eq:xstar} exists 
then this simplifies to what we call the GRANDAB error rate
\begin{align}
\label{eq:errorAML}
    \errorAML(R) = \min\left(\error(R), \IN(\HN+\delta)\right)
\end{align}
where $\error(R)$ is the ML decoding error rate in equation \eqref{eq:error}.
The expected number of guesses until GRANDAB terminates, $\{\GAML^n\}$, satisfies
\begin{align*}
\lim_{n\to\infty} \frac1n \log E(\GAML^n) = \min\left(\Hhalf,1-R,\HN+\delta\right).
\end{align*}
For rates above capacity, $R>1-\HN$, the success probability is
identical to that for ML decoding, given in equation \eqref{eq:success}.
\end{proposition}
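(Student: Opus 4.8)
The plan is to reduce every claim of Proposition~\ref{prop:AML} to the LDP machinery already established in Theorems~\ref{theorem:LDPN} and~\ref{theorem:LDPU} and the computations performed in the proofs of Propositions~\ref{prop:channel_coding_theorem} and~\ref{prop:IMLE}. The key observation is that GRANDAB errs exactly on the event $E_n=\{U^n\leq\GN(\Noise^n)\}\cup\{\tfrac1n\log\GN(\Noise^n)\geq\HN+\delta\}$: either a non-transmitted code-word is guessed before the true noise, or the true noise sits past the abandonment horizon $|\A|^{n(\HN+\delta)}$. First I would bound $P(E_n)$ above and below by $\max$ of the two constituent probabilities (up to the factor $2$, which vanishes after taking $\tfrac1n\log$), so that
\begin{align*}
\lim_{n\to\infty}\frac1n\log P(E_n)=\max\Bigl\{\lim_{n\to\infty}\frac1n\log P(U^n\leq\GN(\Noise^n)),\ \lim_{n\to\infty}\frac1n\log P\bigl(\tfrac1n\log\GN(\Noise^n)\geq\HN+\delta\bigr)\Bigr\}.
\end{align*}
The first term is $-\inf_{a\in[\HN,1-R]}\{\IU(a)+\IN(a)\}$ by equation~\eqref{eq:channel_coding_theorem1} and part~(a) of the proof of Proposition~\ref{prop:channel_coding_theorem}. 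For the second term, the upper bound of the LDP in Theorem~\ref{theorem:LDPN} gives $\limsup\tfrac1n\log P(\tfrac1n\log\GN(\Noise^n)\geq\HN+\delta)\leq-\inf_{x\geq\HN+\delta}\IN(x)=-\IN(\HN+\delta)$, since $\IN$ is non-decreasing to the right of its zero at $\HN$; the matching lower bound follows by evaluating the LDP lower bound on any small open interval around $\HN+\delta$ and using continuity of $\IN$ where it is finite (exactly as continuity of $\IN$ was invoked in the proof of Proposition~\ref{prop:channel_coding_theorem}). Combining, $\lim\tfrac1n\log P(E_n)=-\min\{\inf_{a\in[\HN,1-R]}\{\IU(a)+\IN(a)\},\IN(\HN+\delta)\}$, and this is strictly negative because each term is strictly positive when $R<1-\HN$ and $\delta>0$ (the first by the argument in Proposition~\ref{prop:channel_coding_theorem}, the second because $\IN(x)=0$ only at $x=\HN$).

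The simplification to equation~\eqref{eq:errorAML} is then immediate: when $\xstar$ exists, Proposition~\ref{prop:channel_coding_theorem} already identifies $\inf_{a\in[\HN,1-R]}\{\IU(a)+\IN(a)\}=\error(R)$, so $\min\{\error(R),\IN(\HN+\delta)\}=\errorAML(R)$. For the expected-guesses claim, I would write $\GAML^n=\min(\GN(\Noise^n),U^n,|\A|^{n(\HN+\delta)})=\min(\GML^n,|\A|^{n(\HN+\delta)})$, note that $\{\tfrac1n\log\GAML^n\}$ satisfies an LDP by the contraction principle applied to $\GML^n$ through the continuous map $z\mapsto\min(z,\HN+\delta)$ (its rate function is $\IMLE$ on $[0,\HN+\delta]$, capped at $+\infty$ beyond, with any mass $\IMLE$ placed past $\HN+\delta$ pushed onto the point $\HN+\delta$), and then apply Varadhan's Theorem exactly as in the proof of Proposition~\ref{prop:IMLE}:
\begin{align*}
\lim_{n\to\infty}\frac1n\log E(\GAML^n)=\sup_{x\in\R}\{x-\IMLE[\wedge(\HN+\delta)](x)\}=\min\bigl(\Hhalf,1-R,\HN+\delta\bigr),
\end{align*}
where the last equality uses that truncating the guesswork process at $\HN+\delta$ simply adds $\HN+\delta$ as a third competitor to the $\min(\Hhalf,1-R)$ already obtained for $\GML^n$ when $R<1-\HN$. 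Finally, for $R>1-\HN$ the abandonment threshold $\HN+\delta>1-R$ lies beyond the horizon $1-R$ at which $\GML^n$ concentrates, so it is irrelevant on the scale of large deviations; the success event $\{\GN(\Noise^n)<U^n\}$ is unaffected by the cap whenever it occurs inside $|\A|^{n(\HN+\delta)}$ guesses, and one checks (again via the LDP, noting $\IN(1-R)<\IN(\HN+\delta)$ is not needed—only that the dominant success contribution at guessing level $1-R<\HN+\delta$ survives truncation) that the success rate equals $\IN(1-R)$ exactly as in equation~\eqref{eq:success}.

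The main obstacle I anticipate is the bookkeeping around whether the mass that $\IMLE$ would assign to $x>\HN+\delta$ correctly collapses onto $x=\HN+\delta$ under the contraction, and confirming this does not disturb the Varadhan supremum. Because $\min(\Hhalf,1-R)\leq\Hhalf$ and the truncation only clips the \emph{upper} tail of the guesswork distribution, the supremum $\sup_x\{x-\IMLE(x)\}$ can only decrease, and it decreases precisely to $\min(\Hhalf,1-R,\HN+\delta)$ since the truncation point acts as a hard cap on the attainable values of $\tfrac1n\log\GAML^n$; making this rigorous requires either a direct re-derivation of the scaled CGF of $\GAML^n$ (as was done for $U^n$ in Theorem~\ref{theorem:LDPU}) or careful appeal to the contraction principle for the map $z\mapsto\min(z,\HN+\delta)$, which is continuous but not injective. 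Everything else is a routine re-assembly of the $\max$/$\min$ structure from the two constituent LDPs.
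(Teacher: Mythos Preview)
Your proposal is correct and follows essentially the same route as the paper: the error rate is obtained via the principle of the largest term applied to the union $\{U^n\leq\GN(\Noise^n)\}\cup\{\tfrac1n\log\GN(\Noise^n)\geq\HN+\delta\}$, with the first summand handled by Proposition~\ref{prop:channel_coding_theorem} and the second by the LDP of Theorem~\ref{theorem:LDPN} together with monotonicity and continuity of $\IN$ to the right of $\HN$. For the complexity claim the paper simply says the argument is ``identical'' to that in Proposition~\ref{prop:IMLE}, whereas you spell out the contraction through $z\mapsto\min(z,\HN+\delta)$ and the Varadhan step, and you also supply the justification for the $R>1-\HN$ success-rate statement that the paper leaves implicit; these are elaborations rather than a different method.
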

\begin{proof}
By the principle of the largest term, \cite{Dembo98}[Lemma 1.2.15] or \cite{Lewis95A},
\begin{align*}
   &\limsup_{n\to\infty} \frac 1n \log P\left(\left\{U^n\leq\GN(\Noise^n)\right\}\cup\left\{\frac1n\log \GN(\Noise^n)\geq \HN+\delta\right\}\right)\\
   &=\max\left(\limsup_{n\to\infty} \frac 1n \log P\left(U^n\leq\GN(\Noise^n)\right),
	\limsup_{n\to\infty} \frac 1n \log P\left(\frac1n\log \GN(\Noise^n)\geq \HN+\delta\right)\right),
\end{align*}
with a similar equation holding for $\liminf$. The behavior of the
first term is identified in Proposition \ref{prop:channel_coding_theorem}.
The behavior of the second term is established directly from the
LDP in Theorem \ref{theorem:LDPN} on noting that 
\begin{align*}
\inf_{x\geq \HN+\delta} \IN(x) = \IN(\HN+\delta), 
\end{align*}
as $\IN$ is strictly increasing beyond $\HN$. Coupled with the
continuity of $\IN$, we obtain equation \eqref{eq:errorAML}. The
expected number of guesses until the algorithm completes is determined
in an identical manner to that in Proposition \ref{prop:IMLE}. 
\end{proof}
The interpretation of this result is straight-forward: GRANDAB results
in an error if either the ML decoding is erroneous, as governed
by Proposition \ref{prop:channel_coding_theorem}, or if the algorithm
abandons guessing before an element of the code-book is identified.
Whichever of these two events is more likely dominates the error
rate. So long as the algorithm does not abandon until after querying
all elements in the typical set of the noise, it is capacity
achieving.

The earlier Theorem \ref{theorem:WLLN} also suggests an abandonment
rule when code-books are at rate beyond capacity. One could curtail
querying and declare an error after approximately $|\A|^{n\ystar}$
guesses, where $\ystar$ is maximum over all $y$ satisfying the
conditions of Theorem \ref{theorem:WLLN}. Before that point, it is
likely that the decoding is correct, while afterwards it is likely
to be incorrect.

\section{Examples}
\label{sec:examples}

As all of the results in this paper hold for channels with memory,
to illustrate the complexity, error and success probabilities of
GRAND and GRANDAB decoding we treat binary $\A=\{0,1\}$ noise sequences
$\{\Noise^n\}$ whose elements are chosen via a Markov chain with
transition matrix
\begin{align*}
\left(
        \begin{array}{cc}
        1-a & a \\
        b & 1-b
        \end{array}
\right),
\end{align*}
and assume that $a,b>0$. The initial distribution of the Markov chain
can go unspecified as it plays no role in the asymptotic results.
This model includes the BSC by setting $p=a=1-b$, but, in general,
the second eigenvalue is $1-a-b$, which characterizes the burstiness,
memory or mixing of the Markov chain.

The R\'enyi entropy rate of this noise source can be evaluated
\cite{Malone04} for $\alpha\neq1$ to be
\begin{align*}
H_\alpha =& \frac{1}{1-\alpha}\log
        \left(
        (1-a)^\alpha+(1-b)^\alpha 
	+
        \sqrt{((1-a)^\alpha-(1-b)^\alpha)^2 + 4(ab)^\alpha }
        \right)
        -\frac{1}{1-\alpha}.
\end{align*}
While with $h(a) = -a\log(a)-(1-a)\log(1-a)$ being the binary Shannon entropy,
$H_1=\HN=h(a)b/(a+b)+h(b)a/(a+b)$ is the Shannon entropy rate of the Markovian source. Thus
using equation \eqref{eq:LambdaN} we have an explicit expression
for the resulting scaled cumulant generating function, $\LambdaN$,
of the logarithm of the noise. While the rate function $\IN$ defined
in equation \eqref{eq:IK} cannot be calculated in closed form, it
is readily evaluated numerically, only requiring the solution of a
one-dimensional concave optimization.

While prefactors are not captured in that asymptotic analysis in
Propositions \ref{prop:channel_coding_theorem}, \ref{prop:IMLE} and
\ref{prop:AML}, they allow the following approximations. For GRAND and
GRANDAB decoding, our measure of complexity is the average number of
guesses per bit:
\begin{align*}
\text{ GRAND ave. no. guesses / bit} &\approx \frac{2^{n\min(1-R,\Hhalf)}}{n}\\
\text{ GRANDAB ave. no. guesses / bit} &\approx \frac{2^{n\min(1-R,\Hhalf,\IN(\HN+\delta))}}{n}.
\end{align*}

For comparison, we define the complexity of the straight computation
of the ML decoding in \eqref{eq:straight_MLE} to be
the number of conditional probabilities that must be computed
per bit before rank ordering and determining the most likely code-book
element:
\begin{align*}
\text{No. conditional prob. computations / bit} &= \frac{2^{nR}}{n}.
\end{align*}
Thus we are equating the work performed in one noise guess with one
computation of a conditional probability. As this direct scheme
results in the ML decoding as by noise guessing, it shares the same
error and success probabilities as GRANDAB.

For error and success probabilities we employ:
\begin{align*}
\text{GRAND prob. of error} & \approx 2^{-n\error(R)} 
	\text{ for } R <1-\HN,\\
\text{GRANDAB prob. of error} & \approx 2^{-n\errorAML(R)} 
	\text{ for } R <1-\HN,\\
\text{GRAND \& GRANDAB prob. of success} & \approx 2^{-n\success(R)} 
	\text{ for } R >1-\HN,
\end{align*}
where $\error$, $\errorAML$, and $\success$ are given in equations
\eqref{eq:error}, \eqref{eq:errorAML} and \eqref{eq:success}.

We use the following rule to select the parameter
$\delta$ that determines how far beyond the Shannon typical set
queries are made before abandonment in GRANDAB. With the
stationary probability of noise per bit begin $p$, for a given
block-length $n$ we identify $\delta$ such that the probability of
abandonment is no more than $\pabandon$ times the expected uncoded
block error probability; i.e we solve the following equation
numerically for $\delta(n)$:
\begin{align*}
2^{-n\IN(\HN+\delta(n))} = \pabandon\min(pn,1).  
\end{align*}
Selecting this $\delta$ sets a floor for the block-error probability
generated by abandoned guessing that is a fraction of the uncoded
block-error probability.

We set $\pabandon=10^{-3}$
if the average bit error rate in the channel is $10^{-4}$ and
$\pabandon=10^{-2}$ if it is $10^{-2}$ indicating we
are willing to tolerate block-error probabilities that are of order
at least $100$ or $1000$ times less likely than an uncoded block
error.

For complexity, as the number of computations per bit per second
is normally several orders of magnitude greater than the number of
bits received over the channel per second, we will consider a
complexity feasible if it is in the range of $10^3-10^4$ guesses per
bit. For both GRAND and GRANDAB, this is likely
to be a conservative constraint as the guessing is readily
parallelizable.

\subsection{Binary Symmetric Channel (BSC)}

\begin{figure*}
\begin{center}
\includegraphics[width=0.9\textwidth]{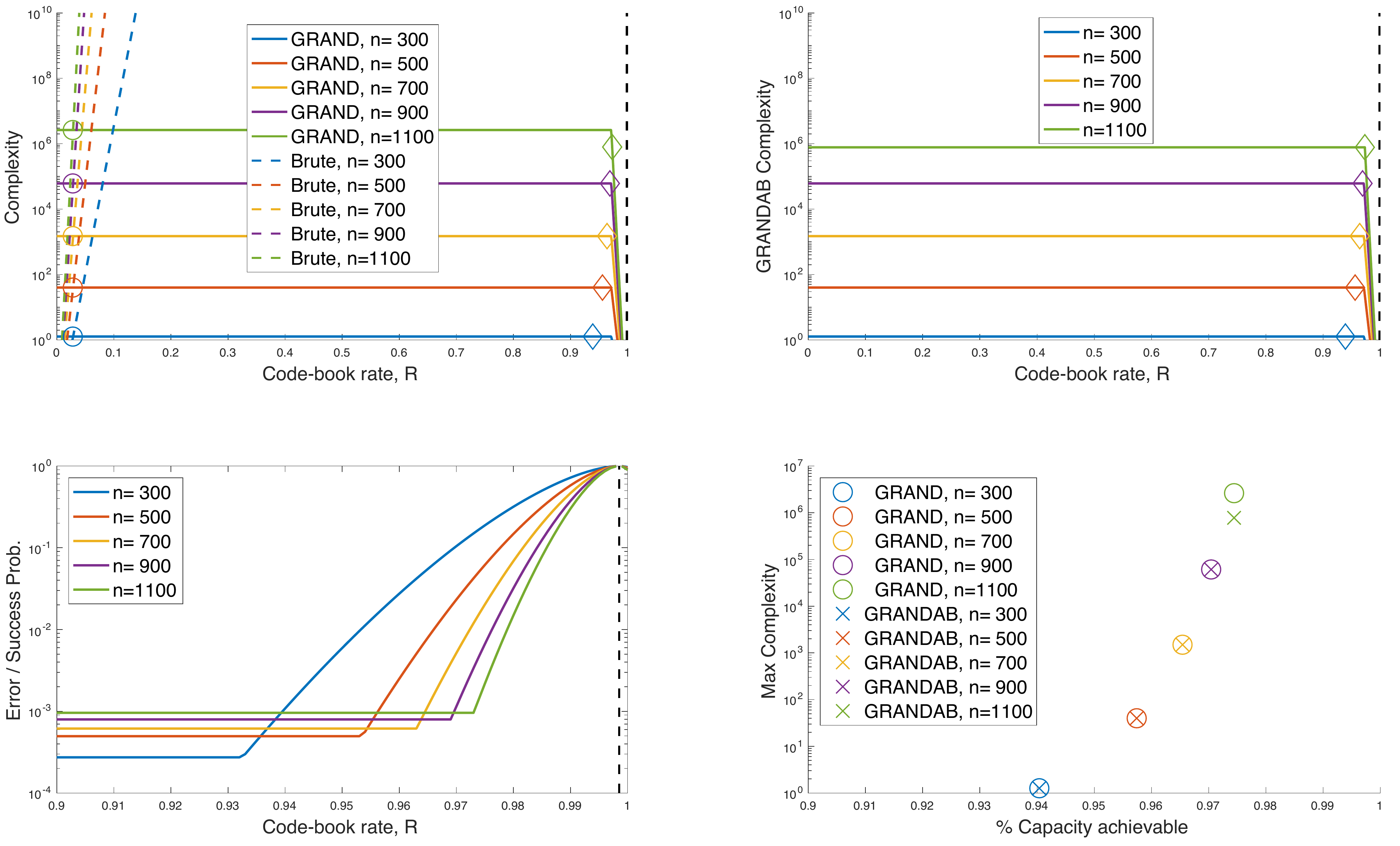}
\end{center}
\caption{
BSC GRAND and GRANDAB decoding. Bit flip probability
$p=10^{-4}$, code-book rate $R$ and block length $n$.  Dashed
vertical lines in three of the panels indicate channel capacity.
Top left panel: complexity of ML decoding by noise guessing (solid
lines) or by
brute force (dashed lines) as a function of code-book rate.  Circles indicate the
rate beyond which computing within the code-book has higher complexity
than noise guessing. Diamonds indicate the rate below which block
error probability is less than $10^{-3}$.
Top right panel: complexity of GRANDAB as a function of code-book
rate, where the free parameter $\delta$ in  GRANDAB is
selected as described in Section \ref{sec:examples}.
The inflection in complexity in these top two panels occurs at the cut-off rate.
Bottom left panel: with a zoomed in x-scale, to the left of capacity
the curves show approximate error probability of GRANDAB for a
range of $n$. To
the right of capacity the curves show approximate success probability
of both GRAND and GRANDAB.
Bottom right panel: for each block length and both GRAND and GRANDAB, the maximum achievable rate, as a percentage of 
capacity, while keeping the block error probability below $10^{-3}$
is plotted against the highest complexity of the code, which
occurs for low-rate code-books. 
}
\label{fig:4_10000}
\end{figure*}

For the BSC with bit error probability $p=10^{-4}$, a GRANDAB decoding
abandonment probability of $\pabandon=10^{-3}$, and a range of block
lengths, $n$, the approximate complexity and error performance of
GRAND and GRANDAB is shown in Fig. \ref{fig:4_10000}. 

The top left panel shows the complexity (average number of guesses
per received bit) for GRAND (solid lines)
and by brute force (dashed lines) for a range of block lengths,
$n$, with the vertical dashed line indicating capacity, $1-\HN$.
The computational complexity of the brute force approach, computing
conditional probabilities for all elements of the code-book rapidly grows with rate.
The complexity of guessing the noise only
decreases as rates increase, with the circles indicating the threshold
above which the complexity of guessing within the code-book is
less than that of brute force determination.  The diamond marks the
code-book rate after which the block error probability for GRAND is $\pblock=10^{-3}$ and so sets an upper-threshold
on the code-book rate. The top right panel shows the equivalent
complexity plot for GRANDAB decoding. The effect of abandonment is to
reduce the maximum complexity for the longest block-length, with
no impact on smaller block-lengths in this instance.

The bottom left panel shows the approximate block-error and
block-success probabilities below and above capacity, respectively,
for GRANDAB as a function of code-book rate. The ML curves would be
identical at higher rates, but would drop further at lower code-rates as the abandonment
of guessing of GRANDAB is what places a floor on the block-error rate. 

For both GRAND and GRANDAB, the final panel,
bottom right, shows the maximum complexity for a given block length,
$n$, versus the \% of capacity achievable with a code-book rate
that provides a block error probability below $\pblock=10^{-3}$.
With the rule of thumb that $10^3-10^4$ guesses per bit is acceptable,
then choosing $n=700$ could realize up to 96.5\% of capacity. Note
that this occurs for a block length that is substantially smaller
than the reciprocal of the bit error rate, $1/p = 10,000$.

The inflection in complexity for the top two panels occurs at the
cut-off rate. This illustrates an intriguing property of GRAND and
GRANDAB. While for sequential decoding of tree codes, decoding
complexity increases steeply when the rate exceeds the cut-off rate,
for decoding by guessing noise, complexity decreases past the cut-off
rate.


\begin{figure*}
\begin{center}
\includegraphics[width=0.9\textwidth]{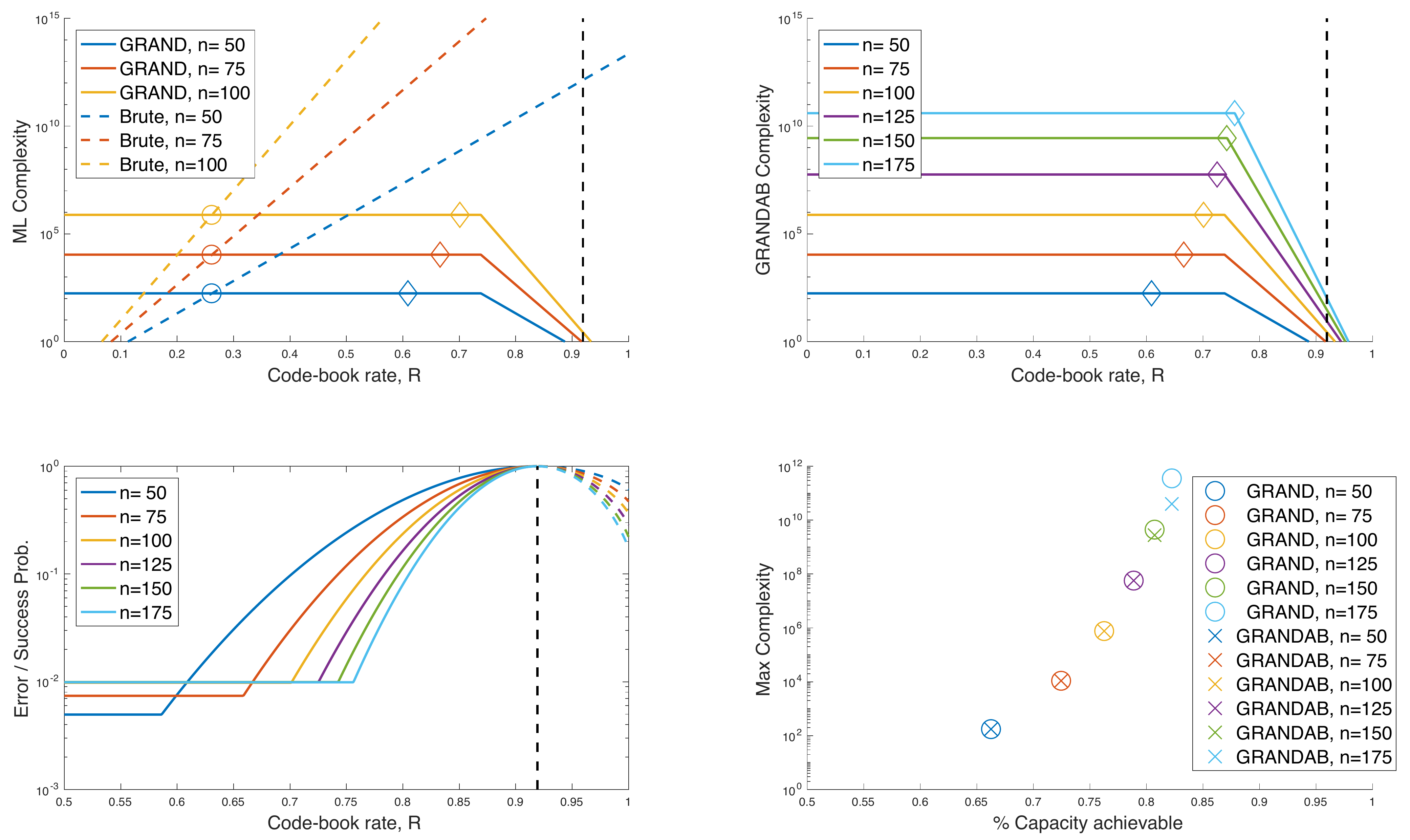}
\end{center}
\caption{BSC GRAND and GRANDAB. Same display
as for Fig. \ref{fig:4_10000}, but with bit flip probability $p=10^{-2}$
and a block-error probability floor of $10^{-2}$. 
}
\label{fig:4_100}
\end{figure*}

Analogous information is displayed for the BSC with bit error
probability 
$p=10^{-2}$
in Fig. \ref{fig:4_100}, but with $\pabandon=10^{-2}$. Again, the
computational complexity of the brute force approach makes it
infeasible even for modest rates. For these higher bit error
probabilities, the effect of  GRANDAB's truncation is felt at smaller
block sizes. This might be expected, given the Shannon entropy of
the noise has increased. As the likelihood of noise is increased,
block-lengths must be reduced to keep guesswork down to the
$10^3$--$10^4$ guesses per-bit range. 
For $p=10^{-2}$, complexity considerations
reduce $n$ to $75$, for which rates providing up to 72.4\%
of capacity are achievable with a block error probability no more
than $\pblock=10^{-2}$.

\subsection{Bursty Markovian noise}

\begin{figure*}
\begin{center}
\includegraphics[width=0.9\textwidth]{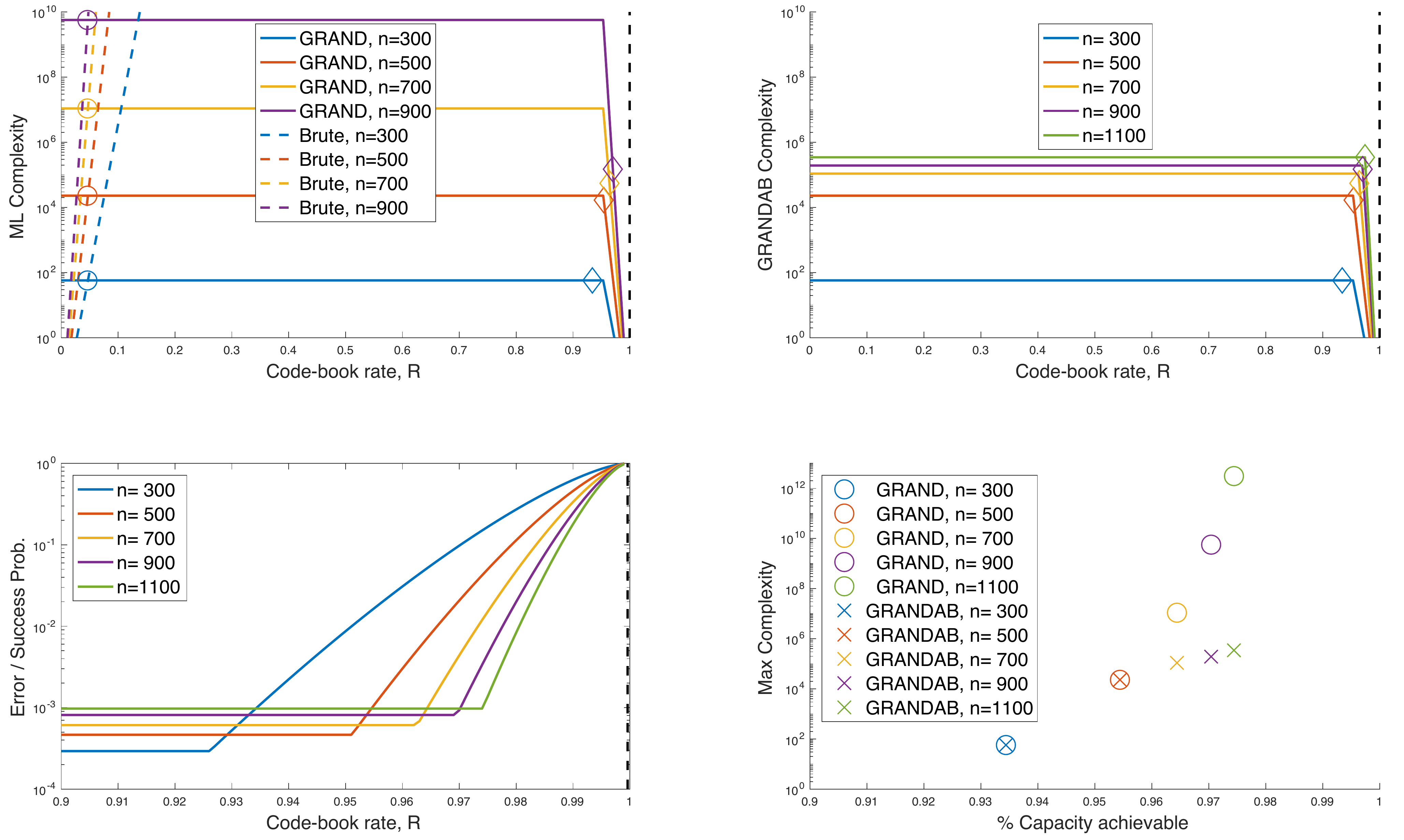}
\end{center}
\caption{GRAND and GRANDAB decoding with binary Markovian noise. Average
bit flip probability $p=10^{-4}$, making it comparable to the BSC
plots in Fig. \ref{fig:4_10000}, but for a Markovian channel with
$a=p/5=2\times 10^{-5}$ and $b = (1-p)/p a \approx  0.2$, so making an
extremely bursty noise channel.  Four displayed panels are analogous
to those described in the caption of Fig. \ref{fig:4_10000}.
}
\label{fig:5_10000}
\end{figure*}

A core feature of the proposed schemes is that they can be applied
in channels with correlated noise without the need for interleaving
and other methods that alleviate the impact of memory. The equivalent
of Fig. \ref{fig:4_10000} is presented in Fig. \ref{fig:5_10000}
where the long run average probability of bit-error is set to be
the same, $p=10^{-4}$, in both, but here $a=10^{-4}/5$ and $b=1/5$.
These have been selected to give a highly bursty source where the
likelihood of a bit flip is small, but the likelihood of an additional
bit flip given one has occurred is $3$ orders of magnitude higher.
The block-lengths displayed for the Markovian channels are the same
as for the corresponding BSC example and again $\pabandon=10^{-3}$,
to enable ready comparison.

For this parameterization, the complexity of GRAND is much higher for this Markovian noise than the BSC
equivalent. Consequently, GRANDAB plays a more significant role in
reducing that complexity for large block lengths by abandonment.
Based on the criteria set for the BSC, for reasons of complexity
$n=500$ would be selected. While this is shorter than the block
length for the equivalent BSC, it is still the case that 95.4\% of
capacity is achievable with a block error rate of less than
$\pblock=10^{-3}$.


Fig. \ref{fig:5_100} can be compared with the BSC in Fig.\ref{fig:4_100},
having $p=10^{-2}$ obtained by $a=10^{-2}/5$ and $b=a(1-p)/p\approx
0.198$. For this noisy channel, again  GRANDAB provides a reduction in
algorithmic complexity at a cost of introducing an error floor. If
the receiver wishes to limit complexity, they would select $n=75$.
With a threshold of a block-error rate set at $10^{-2}$, 71.2\% of
capacity is available.

Note that in all examples presented here the best block lengths are
no larger than the reciprocal of the corresponding bit error rate,
$1/p$. This behavior may be unexpected if we consider error exponents
for Markov channels based on interleaving of the order of the mixing
time of the Markov noise model \cite{Med98}, yet it is a desirable
feature of the scheme, which we have consistently observed.


\begin{figure*}
\begin{center}
\includegraphics[width=0.9\textwidth]{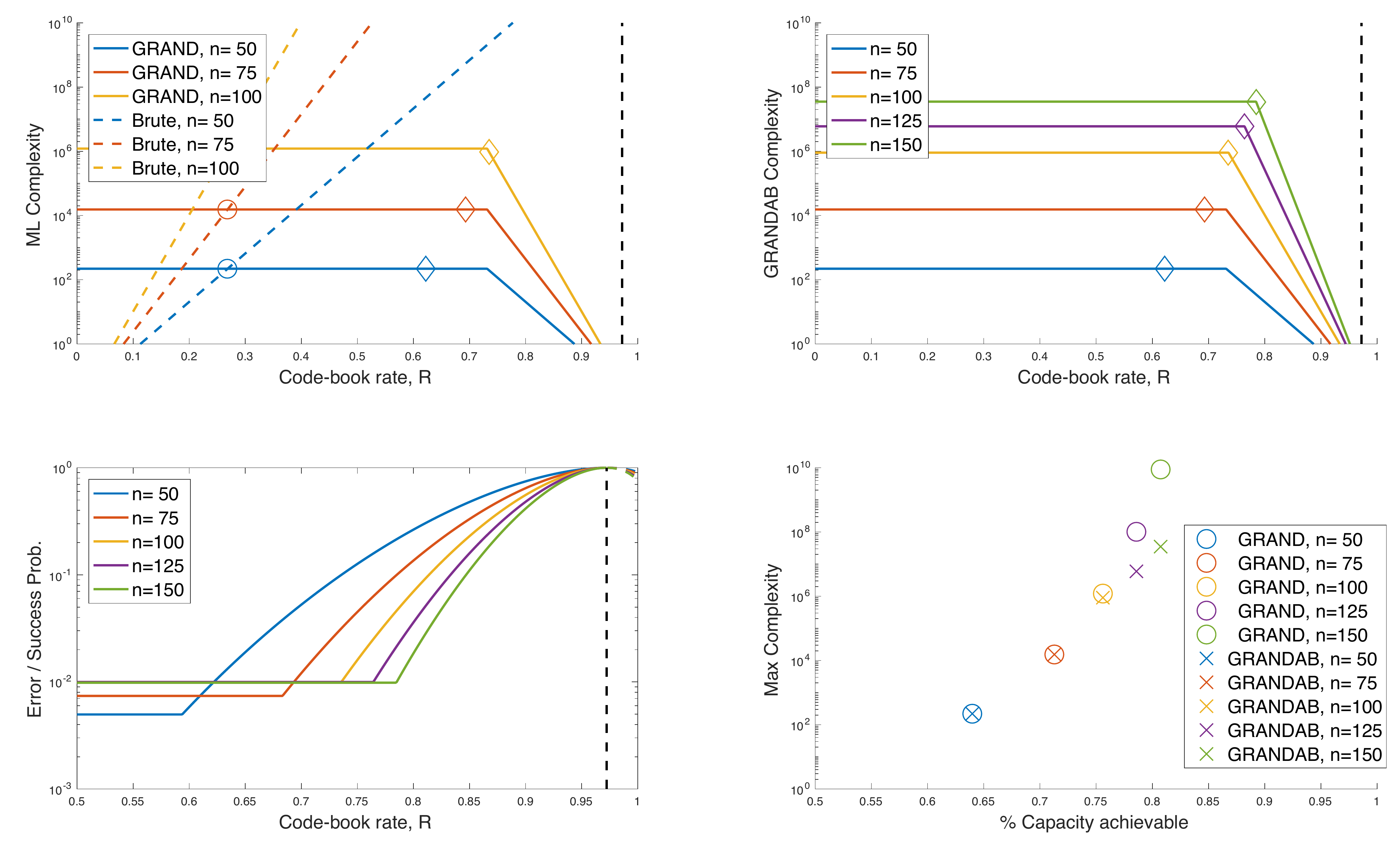}
\end{center}
\caption{GRAND and GRANDAB with Markovian noise. Same display as
for Fig.\ref{fig:5_10000}, but with average bit flip probability
$p=10^{-2}$.
}
\label{fig:5_100}
\end{figure*}

\subsection{Finer approximations for the BSC}
\label{subsec:finer}

\begin{figure*}
\begin{center}
\includegraphics[width=0.9\textwidth]{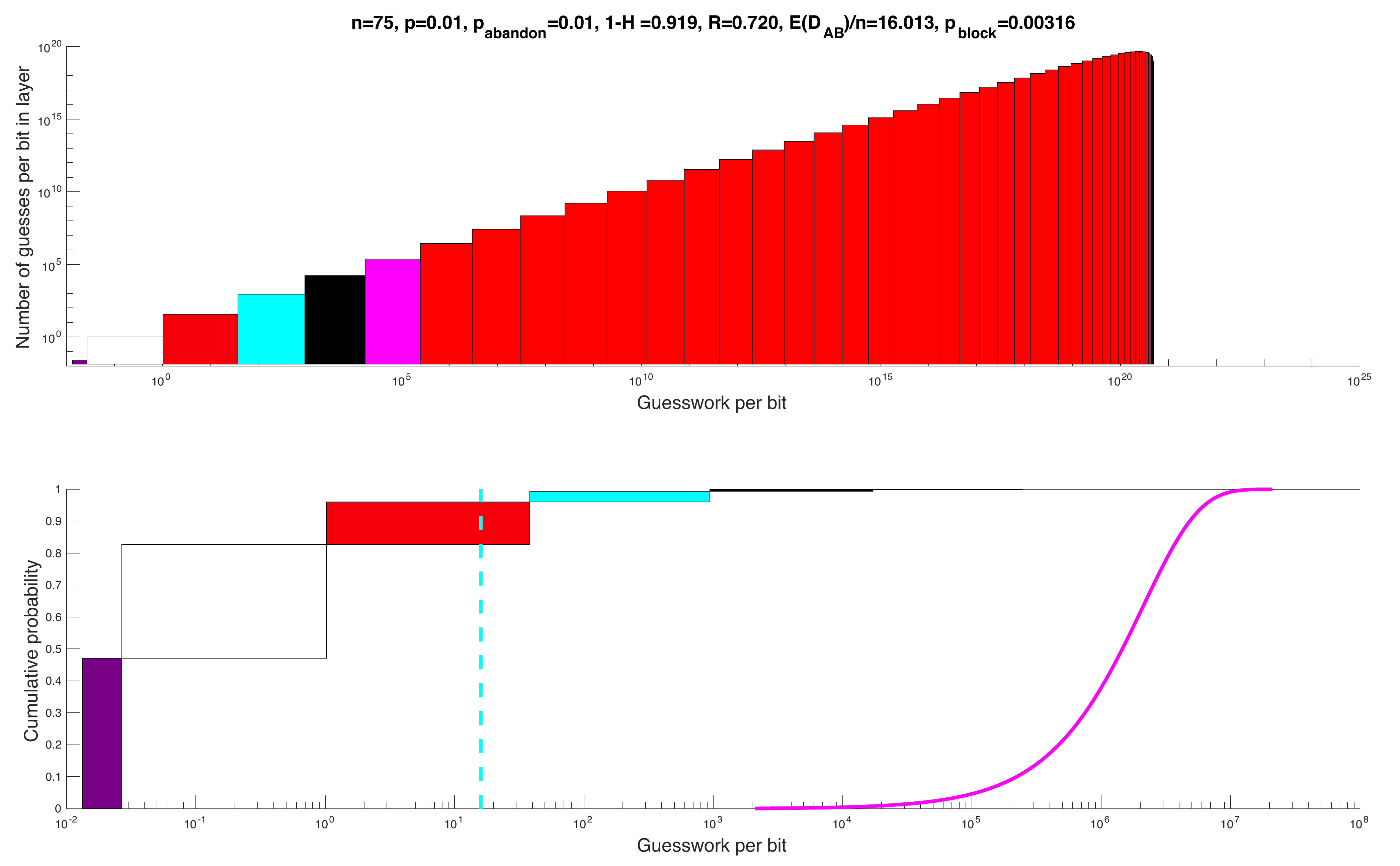}
\end{center}
\caption{
Example: $\A=\{0,1\}$, BSC channel, bit flip probability $p=10^{-2}$,
block length $n=75$, capacity $1-H=0.919$ and a code-book rate of
$R=0.72$.  Upper panel: The x-axis is the total number of queries
per-bit on a log-scale. The y-axis is the number of queries per-bit
that are made to sequences of increasing Hamming distance, also on
a log-scale. Each rectangle demarcates a distinct Hamming distance.
The color coding indicates the probability that is accumulated by
guessing through a layer of given Hamming distance and runs from
blue, $0$, to red, $1$. The white layer is where $2^{nH}$ guesses
have been made, at the core of the Shannon Typical Set. Accumulation
of probability around the white layer is asymmetric.  Prior to it,
probability is quickly obtained, but the decreasing probability per
sequence coupled with the increasing number of sequences with the
same probability results in $2^{nH_{1/2}}$, asymptotically the
average guesswork, being in the black layer. The cyan layer indicates
the guessing layer by the end of which there is a 99\% chance of
identifying true noise. Abandoning guessing here if no code-word
had been identified would result in $205$ fewer guesses per bit
than would, on average, be necessary to identify the true noise
sequence. With a code-book rate of $0.72$, using the approximation
in \eqref{eq:finerU}, the magenta layer is where a non-transmitted
code-word would, on average, be identified.  Lower Panel: Cumulative
probability of guesswork with the same colour coding as the upper
panel, but with a truncated y-axis to show more detail. The dashed
vertical cyan line is located at the average number of guesses per-bit per
GRANDAB decoding, $E(\GAML)/n\approx16$. The magenta
line is the cumulative distribution of the number of guesses per
bit until a non-transmitted code-word is identified, using the
approximation to $U^n$ found in equation \eqref{eq:finerU}.  Note
that, with a log x-scale, it is tightly centered around its mean,
resulting in a block error probability of $\pblock = 3.15 \times
10^{-3}$.  } \label{fig:6_100} \end{figure*}


\begin{figure*}
\begin{center}
\includegraphics[width=0.9\textwidth]{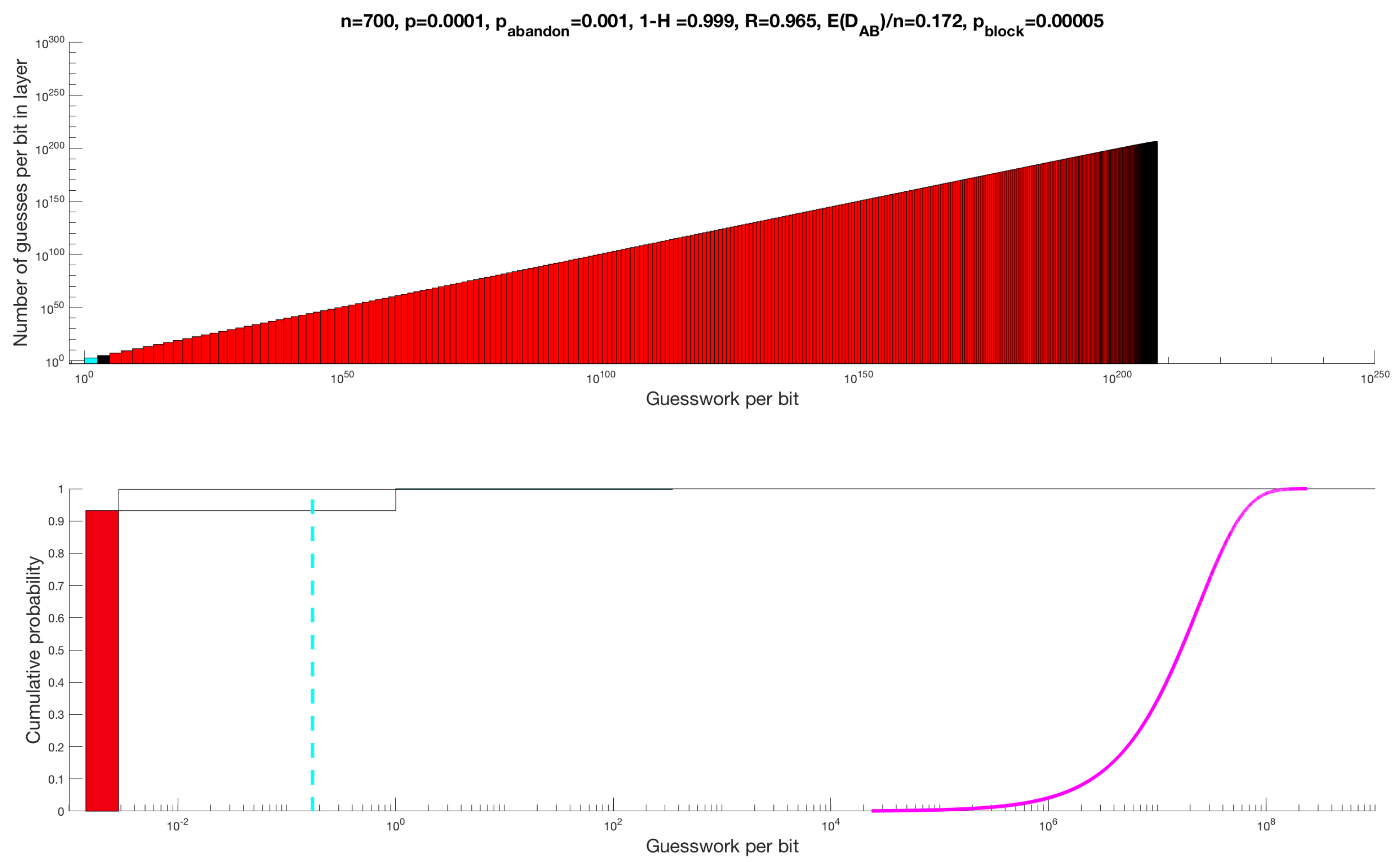}
\end{center}
\caption{
Example: $\A=\{0,1\}$, BSC channel, bit flip probability $p=10^{-4}$,
block length $n=700$, capacity $1-H=0.999$ and a code-book rate
of $R=0.965$.
Upper panel: Same layout as in Fig. \ref{fig:6_100}. Abandoning
guessing after 99.9\% probability is accumulated if no code-word
had been identified would result in $115$ fewer guesses per bit
than would, on average, be necessary to identify the true noise
sequence.  Lower Panel: As in Fig. \ref{fig:6_100}. 
Average number of guesses per-bit per GRANDAB decoding $E(\GAML)/n\approx0.172$.
Block error probability of $\pblock = 4.69 \times 10^{-5}$.
}
\label{fig:6_10000}
\end{figure*}

For uniform-at-random code-books, Proposition
\ref{prop:channel_coding_theorem} provides error exponents for
general noise processes. In the case of the memoryless channel,
however, a more exact computation of the block error probability
is possible. This is achieved by availing of the precision of the
finer approximation to the distribution of the number of guesses
until a non-transmitted code-word is identified, $U^n$, given in
equation \eqref{eq:finerU}.

The error probability is one minus the success probability,
\begin{align*}
P(U^n\leq G(\Noise^n)) = 1-P(G(\Noise^n)<U^n), 
\end{align*}
and we shall provide a more exact computation of the latter.
Restricting to a BSC, there are $n$ choose $0$ noise strings with
no errors, $n$ choose $1$ strings with one error, and so forth.
Thus we define $l_{-1}=0$ and
\begin{align*}
l_k = {n \choose 0} + {n \choose 1} + \cdots + {n \choose k}
\end{align*}
for each $k\in\{0,\ldots,n\}$.
Consequently in guesswork order we have
\begin{align*}
P(G(N^n)=m) = p^k(1-p)^{n-k} 
\end{align*}
for every $m\in\{l_{k-1}+1,\ldots,l_k\}$. Thus
\begin{align*}
P(G(\Noise^n)<U^n)= \sum_{k=0}^n p^k(1-p)^{n-k} \sum_{m=l_{k-1}+1}^{l_k} P(U^n>m).
\end{align*}
Approximating the distribution of $U^n$ by
\begin{align*}
P(U^n>m) \approx \exp(-m2^{-n(1-R)}),
\end{align*}
as suggested by equation \eqref{eq:finerU},
and computing the resulting geometric sum gives 
\begin{align}
P(G(\Noise^n)<U^n)\approx 
&\sum_{k=0}^n p^k(1-p)^{n-k} \left(\frac{e^{-(l_{k-1}+1)2^{-n(1-R)}}-e^{-(l_k+1)2^{-n(1-R)}}}{1-e^{2^{-n(1-R)}}}\right).
\label{eq:BSC_PBLER}
\end{align}
Thus for a BSC we can compute a finer approximation to the block error probability,
$\pblock$, by a sum of only $n+1$ terms.

Fig. \ref{fig:6_100} reconsiders the scenario treated via the large
deviations analysis in Fig. \ref{fig:4_100}, but with this finer
approximation for the block error probability. The $n$ and $R$ used
correspond to those deduced from the asymptotic analysis as maximizing
rate subject to constraints on block error probability while
maintaining a certain degree of complexity.  The true block error probability
is $3\times 10^{-3}$, when the target in the asymptotic regime was
$10^{-2}$ indicating good accuracy.

In all
cases we have examined beyond those shown here, the asymptotic results compare well with the more precise
computations which, if anything, suggest that higher rates can be
obtained while still meeting block error targets.

\section{Discussion and Conclusions}
\label{sec:conc}

We have introduced and analyzed two decoding algorithms based on
guessing that are suitable for a broad class of noise processes.
Subtracted noise from a received signal in order from most likely
to least likely, the first instance that is in the code-book
corresponds to the ML decoding. Both GRAND, which identifies an ML decoding by noise guessing,
and  GRANDAB, an approximate ML decoding by noise guessing algorithm in which the receiver quits its
attempts to identify an element of the code-book after a given
number of unsuccessful queries that is determined by the Shannon
entropy of the noise, are capacity achieving when used with
uniform-at-random code-books. 
To establish capacity results, we have assumed that the source is uniform. If, at the cost of capacity, the source is not uniform, the decoding algorithm remains unchanged. 
Depending on channel conditions, GRANDAB
has the potential benefit over ML decoding of decreased complexity,
even for DMCs. Analytically leveraging this noise-focused view,
we provide explicit error and success exponents for code-book rates
that are within and beyond capacity, respectively, providing a
version of the Channel Coding Theorem.

While DMCs form the classic model in information theory, real
communication channels are not memoryless, e.g. \cite{chen2015},
and commonly are made 
artificially so by interleaving for many existing
decoding schemes to function well, leading to additional delays in
encoding and decoding. In contrast, all of the results presented
in the present paper for GRAND and GRANDAB hold
directly for noise processes with more involved structures, and no
interleaving is required for their use. To illustrate that we have
presented analytic examples based on bursty Markovian noise.

The noise guessing approach underlying GRAND and GRANDAB has other
desirable features. For example, both schemes are universally
applicable in the sense that their execution only depends on the
structure of the noise rather than how the code-book was constructed.
Moreover, guesswork orders are known to be robust to mismatch
\cite{Sundaresan07}.

For both GRAND and GRANDAB, we provide asymptotic
results on the number of queries that the receiver must make per
received code-word for uniform-at-random code-books. Notably, the
approach becomes less complex as the code-book rate increases.

While testing a string's membership of a uniform-at-random code-book
can be achieved efficiently with the code-book stored in a $\A$-ary
tree, the code-book description requires substantial memory, limiting
utility for large block-lengths. 
Any use of a random code-book also requires  techniques for encoding, and for converting a code-word to an information word, but both of these can be performed with linear complexity. To encode, potential inputs can be stored in a lexicographically ordered $\A-$ary tree of depth $nR$ with a final leaf that contains a string of length $n$, the code-word to be transmitted. Thus finding an encoding entails a tree search, i.e. $nR$ operations. When mapping a code-word to an information word, the code-book can be stored as a lexicographically ordered $\A-$ary tree of depth $n$ with a final leaf that contains a string of length $nR$, the corresponding information-word. Thus identifying an information word also requires a tree search, i.e. $n$ operations.

An alternative instantiation of
the schemata would be realized in combination with linearly constructed
code-books such as Hamming, LDPC, or random linear code-books.
While
binary linear code-books can be capacity achieving for the BSC \cite{Gal68},
random linear code-books have recently been shown to be capacity-achieving
\cite{DZF16} for DMCs.
To describe a linear code-book, one need solely record
its generator matrix and so storage is small. Using the parity check
matrix associated with the generator, testing a string for membership
of the code-book is efficient as it only requires the computation
of the syndrome of the received string less guessed noise. Using
ML decoding by noise guessing with linear code-books effectively
results in replacing the usual coset leader of each syndrome, the
noise string in the coset with minimum Hamming weight, with the
most likely noise string in the coset. A thorough investigation of
that possibility, along with small block size properties, integration
into outer coding schemes, and so forth, is the topic of ongoing
work. The current work treats only a hard detection model where
only discrete data is presented to the decoder. Extending the
principles of these noise guessing techniques to a continuous case
where soft detection information is available imputes quantization
issues that merit their own investigation, and is the subject of
further ongoing work.

\section*{Acknowledgment}
This work is in part supported by the National Science Foundation under Grant No. 6932716.


\end{document}